\newtheorem{prop}{Proposition}
\def\half{\hbox{$1\over2$}}
\newcommand{\bestmodel}[1]{\mathcal{M}_{\widehat{k}(#1)}}
\newcommand{\model}[1]{\mathcal{M}_{k(#1)}}
\newcommand{\bestparam}[1]{\widehat{\theta}_{\widehat{k}(#1)}}
\newcommand{\param}[1]{\widehat{\theta}_{k(#1)}}
\DeclareMathOperator*{\argmax}{argmax}
\begin{document}

\title{A general framework for probabilistic model uncertainty}

\author{\name Vik Shirvaikar$\hspace{0.3mm}^{1}$ 
        \email vik.shirvaikar@spc.ox.ac.uk \\
        \name Stephen G. Walker$\hspace{0.3mm}^{2}$ 
        \email s.g.walker@math.utexas.edu \\ 
        \name Chris Holmes$\hspace{0.3mm}^{1}$ 
        \email chris.holmes@stats.ox.ac.uk \\}

\maketitle

\begin{abstract}
Existing approaches to model uncertainty typically either compare models using a quantitative model selection criterion or evaluate posterior model probabilities having set a prior. In this paper, we propose an alternative strategy which views missing observations as the source of model uncertainty, where the true model would be identified with the complete data. To quantify model uncertainty, it is then necessary to provide a probability distribution for the missing observations conditional on what has been observed. This can be set sequentially using one-step-ahead predictive densities, which recursively sample from the best model according to some consistent model selection criterion. Repeated predictive sampling of the missing data, to give a complete dataset and hence a best model each time, provides our measure of model uncertainty. This approach bypasses the need for subjective prior specification or integration over parameter spaces, addressing issues with standard methods such as the Bayes factor. Predictive resampling also suggests an alternative view of hypothesis testing as a decision problem based on a population statistic, where we directly index the probabilities of competing models. In addition to hypothesis testing, we demonstrate our approach on illustrations from density estimation and variable selection.
\end{abstract}

{\let\thefootnote\relax\footnote{{$^1$Department of Statistics, University of Oxford }}}
{\let\thefootnote\relax\footnote{{$^2$Department of Statistics and Data Sciences, University of Texas at Austin}}}

\section{Introduction}

Missing information is the source of statistical uncertainty. Given an observed sample from a population, any identifiable quantity of interest (such as a parameter) would be fully determined by definition if we had access to the complete population. In this paper, we expand upon this principle to provide a general construction for probabilistic model uncertainty. Our approach uses \emph{predictive resampling}, as described by \citet{fong_martingale_2024}, to recursively alternate between selecting the best current model and sampling a new observation from the model. Ultimately, this allows us to retrieve Monte Carlo uncertainty on the best model as identified with the complete missing information and a specified criterion for model selection. We motivate this approach in detail, and demonstrate its application to hypothesis testing and other decision problems. 

If $x_{1:n}$ is the observed data, then any associated uncertainty --- for example, as to which model is best --- flows from the unseen $x_{n+1:\infty}$, or in practice, $x_{n+1:N}$ for some sufficiently large $N$. With this view, uncertainty quantification requires the construction of a generative, or predictive, model for $p(x_{n+1:\infty} \mid x_{1:n})$, i.e., a joint distribution for the missing observations given what has been observed. The missing population-scale data is imputed by sampling from $p(x_{n+1:N} \mid x_{1:n})$, where taking $N \rightarrow \infty$ then provides Monte Carlo estimates and uncertainty for the quantity of interest, as a function of the complete data. This predictive approach is fundamentally different to the usual frequentist view of uncertainty, based on repeated experiments of size $n$, but also deviates from the usual Bayesian method of eliciting a prior distribution to conduct inference directly on a parameter.

\citet{fong_martingale_2024} demonstrate how the general specification of $p(x_{n+1:\infty} \mid x_{1:n})$ can be achieved through predictive resampling, where a one-step update
\begin{equation} \label{eq:update1}
\{p(\cdot \mid x_{1:i}), x_{i+1}\} \rightarrow p(\cdot \mid x_{1:i+1})
\end{equation}
recursively alternates between imputing the next data point, $x_i \sim p(\cdot\mid x_{1:i-1})$ for $i>n$, and updating the predictive distribution. Hence,
\begin{equation}
p(x_{n+1:N} \mid x_{1:n})=\prod_{i=n+1}^{N} p(x_i\mid x_{1:i-1}).   
\end{equation}
Importantly, this construction bypasses the need for a typical Bayesian likelihood-prior update, with probabilistic statements expressed directly in terms of observable data \citep{de_finetti_prevision_1937, fortini_predictive_2012}.

The above argument has been applied in the context of parameter estimation or conditional prediction \citep{holmes_statistical_2023}, but we now extend it to model uncertainty. If we consider a set of candidate models $\{\mathcal{M}_k\}_{k=1}^K$ for the data, the uncertainty in the optimal model still arises from the missing $x_{n+1:\infty}$. We require a generative model $p(x_{n+1:\infty} \mid x_{1:n})$ for the missing data given the observed data, and at any given point, the natural choice for the prediction of new information is the best model available with the current information. 

To impute the missing data, we therefore revise the one-step predictive update from equation (\ref{eq:update1}) with respect to the candidate models. In particular, for $x_{1:n}$, we use a model selection criterion $C$ to determine the best current model, which we write as $\bestmodel{n}$, along with associated parameter value(s) $\bestparam{n}$. (We discuss the specification of $C$ below.) We sample a new observation from $p(\cdot \mid \bestmodel{n}, \bestparam{n})$, add it to the observed data, and recursively repeat this process up to some sufficiently large $N$. The update becomes
\begin{align} \label{eq:update2}
x_{n+1} &\sim p(\cdot \mid \bestmodel{n}, \bestparam{n}), \\
\hat{k}(n+1) &= \argmax_k C(\mathcal{M}_{k(n+1)}, x_{1:n+1}).
\end{align}
We aim to show that the model choice $\widehat{k}(n)$ converges as $n \to \infty$. 

Rather than evaluating the evidence for models under limited data, our inferential target shifts to determining different possible completions of the dataset, across which we then retrieve Monte Carlo uncertainty on the final selected model. 
This is given by
\begin{equation}
p(\mathcal{M}_k \mid x_{1:n}) = B^{-1}\sum_{b=1}^B \mathbb{1} \left(\bestmodel{\infty}^{(b)} = \mathcal{M}_k \right)
\end{equation}
as the posterior probability of each model, where $b$ indexes the repeated replications of $x_{n+1:\infty}$ over $B$ trials. (In practice, we assess the behavior of $C$ as the sample size grows and stop at some sufficiently large $N$ when the choice of model has clearly converged.) The resulting estimates express uncertainty over the space of candidate models, without requiring the elicitation of prior distributions on either the models or their parameters.

A key ingredient in this procedure is the one-step model selection criterion $C$. Since the eventual goal is to make inferences based on the complete data, we require a criterion which is consistent, or guaranteed to select the correct model as $n \rightarrow \infty$ \citep{claeskens_model_2008}. The above pipeline can then be viewed as a way of converting our consistent criterion directly into a statement of posterior uncertainty over the space of models. Current approaches to this question often rely on bootstrapping and qualitative assessment of robustness. Instead, by propagating uncertainty through the unseen data, we retrieve a probabilistic measure over which model would be best for the complete data under the specified criterion. As we will discuss, the requirement of consistency admits the use of the Bayesian information criterion (BIC), but not the Akaike information criterion (AIC) or leave-one-out cross-validation (LOO-CV). The AIC and LOO-CV are inconsistent, in that they do not necessarily select the correct model as $n \rightarrow \infty$, and so should not be used in a setting where our goal is to understand uncertainty in decisions based on the complete data.

The work of \citet{draper_assessment_1995} on model expansion provides a useful context for our approach. Draper denotes a model $\mathcal{M} = (S, \theta)$ as a set of ``structural assumptions" $S$ (such as the assumed linear structure, the link function in a GLM, etc.) along with parameter(s) $\theta$. He notes that statistical applications typically assume a best structure $S^*$ and then discuss parametric uncertainty on $\theta$, but that this equates to placing an overly concentrated prior point mass of one on $S^*$, leading to overconfident conclusions. A preferable approach would be to propagate structural uncertainty by placing a more diffuse prior $p(S)$ across a wider space of models. Draper suggests that this wider space could be determined by ``starting with a single structural choice $S^*$ and expanding it in directions suggested by context", though the specific prescription for this expansion is determined on a case-by-case basis. 

Our approach is also a form of model expansion around the initial $\bestmodel{n}$, where the progressive search for new models is guided by the imputation of unseen data. Figure \ref{fig:jellyLL} represents this idea with a schematic diagram based on a simple hypothesis test, discussed in further detail in Section \ref{sec:simplehyp}. Each individual path tracks the value of a sufficient summary statistic (in this case, the mean) for one possible realization of the complete data as new samples are imputed. All paths start at a common initial best model -- in this case, the alternative hypothesis $H_1$ -- but by recursively sampling new observations and then updating the choice of best model, we introduce uncertainty over the model space, which converges once the sample size becomes sufficiently large.

\begin{figure}
\centering
\includegraphics[width=\textwidth]{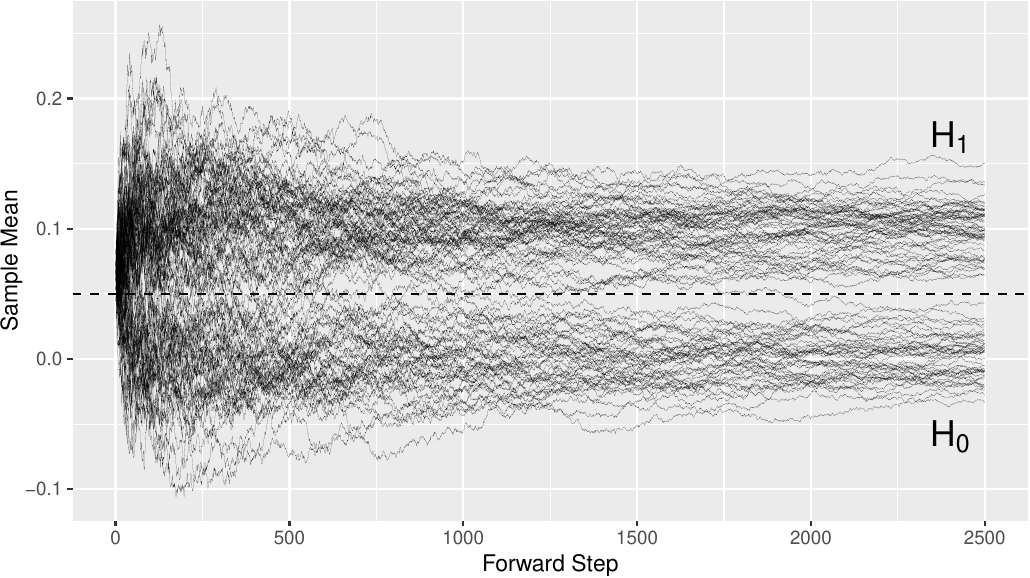}
\caption{Schematic diagram showing propagation of uncertainty through sampling of missing observations, where different possible realizations of the complete data start at the same $\bestmodel{n} = H_1$, but individually converge to either $H_0$ or $H_1$.}
\label{fig:jellyLL}
\end{figure}

The practical implications of our work are similar to those expressed in the ``prequential'' (predictive sequential) approach of \citet{dawid_present_1984}. Dawid states that ``the purpose of statistical inference is to make sequential probability forecasts for future observations rather than to express information about parameters." Our overarching motivation goes one step further - we forecast soon-to-be-observed data in order to then index uncertainty over possible model choices - but the underlying statistical object is very similar. In particular, Dawid argues for the plug-in predictive, using (for example) the maximum likelihood estimate, and specifies the joint distribution
\begin{equation}
\prod_{i=1}^n p(x_i \mid x_{i-1},\widehat\theta_{i-1})
\end{equation}
to evaluate the likelihood of a model based on observed data. We extend this plug-in strategy to select the current best model $\bestmodel{n}$ and associated parameter value $\bestparam{n}$, as determined by a (possibly dimension-penalized) likelihood function. The sequential forecasting of unseen data is valuable not only in its own right, but also in terms of what it progressively reveals about the corresponding model uncertainty.

What distinguishes this framework from traditional approaches to model uncertainty? Standard calculations using Bayes factors \citep{kass_bayes_1995} exhibit a known over-dependence on the model priors, and entail the computational complexity of integrating over the complete parameter space within a model. There have been several variations on the Bayes factor theme, such as fractional \citep{ohagan_fractional_1995} and intrinsic \citep{berger_intrinsic_1996} Bayes factors, but these require the loss of some training data or the selection of an arbitrary weight in order to calibrate an ``objective" prior. Alternative approaches apply sampling strategies to explore the space of models \citep{george_variable_1993, madigan_bayesian_1995}, of which reversible-jump Markov chain Monte Carlo (RJ-MCMC) is perhaps the most well-known; see \cite{green_reversible_1995}. MCMC-style methods require the construction of an ergodic Markov chain that can eventually visit all parts of the model space. 

In contrast, by focusing exclusively on the prediction of observable data points to propagate uncertainty, we bypass the need for any subjective prior distribution over the space of candidate models or their parameters. Predictive resampling makes complete use of the information contained in the training data, with a fully pre-specified protocol to define $p(\mathcal{M}_k \mid x_{1:n})$. Additionally, the procedure is pragmatic and straightforward. We reduce uncertainty quantification to two simple steps, model comparison and simulation of a new observation, which can be easily parallelized over the $B$ trials. The first step, which requires the optimization using a consistent model selection criterion, can leverage modern methods for efficient model search, and is computationally simpler than the transformations required to jump between dimensions in sampling methods such as RJ-MCMC.

The remainder of the paper is organized as follows. In Section 2, we formally present our framework for model uncertainty in the context of traditional approaches. Section 3 focuses specifically on hypothesis testing, and compares predictive resampling with traditional frequentist and Bayesian methods. Section 4 contains further theory related to the necessary conditions for model convergence. Section 5 contains illustrations, and Section 6 provides a discussion and conclusions. Code is available at https://github.com/vshirvaikar/MPmodel.

\section{Model uncertainty via predictive resampling}

Before detailing our approach to quantifying model uncertainty, we outline the usual Bayesian strategies.

\subsection{The standard Bayesian approach}

Consider candidate models $\{\mathcal{M}_k\}_{k=1}^K$ that aim to describe some observed data $\mathcal{D}$. Bayesian approaches to model uncertainty target the posterior probability of each individual model given the data,
\begin{equation}   
P(\mathcal{M}_k \mid \mathcal{D}) = \frac{P(\mathcal{D} \mid \mathcal{M}_k) P(\mathcal{M}_k)}{P(\mathcal{D})}
\end{equation}
where $P(\mathcal{M}_k)$ is the prior probability of model $\mathcal{M}_k$ and $P(\mathcal{D} \mid \mathcal{M}_k)$ is the marginal likelihood of the data under model $\mathcal{M}_k$. These posterior probabilities may be of interest in their own right, or as an intermediate step for model averaging,
\begin{equation}
P(\xi \mid \mathcal{D}) = \sum_{k=1}^K P(\xi \mid \mathcal{M}_k) P(\mathcal{M}_k \mid \mathcal{D}),    
\end{equation}
where $\xi$ is a separate target quantity, such as a prediction or parameter, that could vary considerably across different models \citep{leamer_specification_1978, hoeting_bayesian_1999}.

The marginal likelihood, also known as the evidence, is given by
\begin{equation}
P(\mathcal{D} \mid \mathcal{M}_k) = \int P_{\mathcal{M}_k}(\mathcal{D} \mid \theta_k) P(\theta_k) d\theta_k
\end{equation}
and integrates over the parameters that are specific to each model. This has been described as the Bayesian encoding of ``Occam's razor'' by \citet{mackay_bayesian_1992}, since simpler models place a greater probability weight on a narrower range of possible datasets, and therefore return a greater likelihood if consistent with the observed data.

The normalizing constant $P(\mathcal{D})$ cancels out, so model selection often reduces to simply comparing the marginal likelihood or evidence $P(\mathcal{D} \mid \mathcal{M}_k)$ for each model, weighted by the prior probability $P(\mathcal{M}_k)$. In the case of two competing models $\mathcal{M}_1$ and $\mathcal{M}_2$, the ratio between these quantities is referred to as the Bayes factor
\begin{equation}
BF = \frac{P(\mathcal{D} \mid \mathcal{M}_1) P(\mathcal{M}_1)}{P(\mathcal{D} \mid \mathcal{M}_2) P(\mathcal{M}_2)} 
\end{equation}
with a value of $BF>1$ presumably indicating support for $\mathcal{M}_1$ over $\mathcal{M}_2$, and vice versa. 

A well-known problem with the Bayes factor is that it depends on the model priors substantially, especially as the number of candidate models grow. As with many Bayesian analyses, a ``uniform prior'' with constant $p(\mathcal{M}_k)$ for all models will not be truly uniform in the likely event that the ``size'' of each model is different. Efforts to specify objective prior distributions have led to an array of Bayes factor alternatives, such as intrinsic and fractional Bayes factors \citep{kass_bayes_1995}. These methods use the observed data to help specify the prior in some way, such as by weighting the likelihood \citep{ohagan_fractional_1995} or setting aside a portion of data for ``training" \citep{berger_intrinsic_1996}. However, this still requires an element of user choice, and can also result in the loss of some information contained within the observed data.

Another frequently-cited problem with the Bayes factor is the computational complexity of specifying a parameter prior and integrating over the full parameter space for each model. The Bayesian information criterion (BIC) provides an approximation to the negative log-evidence, given by
\begin{equation}
\text{BIC} = d\log n-2\log \widehat{\mathcal{L}}    
\end{equation}
where $d$ is the dimension of the model, $n$ is the sample size, and $\widehat{\mathcal{L}}$ is the maximum likelihood of the model at the optimal parameter values. (The BIC is sometimes denoted as the negative of the above; in our case, a lower value indicates a better model fit.) The penalty for model complexity guards against overfitting, which could occur when directly optimizing $\log \widehat{\mathcal{L}}$; the resulting remainder term for the BIC's approximation to the log-evidence is bounded in $n$ \citep{schwarz_estimating_1978}. A tempting idea is then computing $\exp(-{\text{BIC}})$ to target the marginal likelihood directly, but \citet{kass_bayes_1995} show that this has a relative error of $O(1)$ in approximating the Bayes factor, meaning that even for large samples, the BIC should not be used to evaluate exact posterior probabilities.

\subsection{Recursive model updating}
\label{sec:mcmethod}

Here we present a predictive resampling approach that emphasizes missing data as the source of model uncertainty. With observed $x_{1:n}$, the guiding principle is that uncertainty quantification for any statistical task, including model selection, requires the construction of a model for the data we have not observed, given what has been observed. Algorithm 1 outlines the procedure, alternating between the selection of the best available model at any given point and the imputation of a new observation. This is presented for finite $N$, and the choice of model converges as $N \rightarrow \infty$.

Following the logic of \citet{fong_martingale_2024}, we express Monte Carlo uncertainty over the final choice of model in light of different possible imputations or realizations of the complete dataset. Rather than targeting a parameter $\theta_\infty$, our goal is $p(\mathcal{M}_k \mid \mathcal{D})$ directly on the model space. The benefit of this approach is that uncertainty arises from actual observables. The only required inputs are a routine to optimize a consistent model selection criterion, and a method to generate new samples from the model. We go from decision to uncertainty, making a fully determined choice for each replication of the data, rather than from uncertainty to decision, making a single choice based on the evidence for limited data.

\begin{center}
  \begin{minipage}{.9\linewidth}
    \singlespacing
    \begin{algorithm}[H]
      \caption{Predictive resampling}
      \begin{algorithmic}[1]
        \State Specify search space of candidate models $\{\mathcal{M}_k\}$
        \State Set number of trials $B$ and final sample size $N \gg n$
        \For{$b$ from 1 to $B$}
          \For{$i$ from $n+1$ to $N$}
            \State Calculate consistent model selection criterion $C(\mathcal{M}_{k(i-1)}, x_{1:i-1})$
            \State Optimize and identify best model $\bestmodel{i-1} = \argmax_k C(\cdot, \cdot)$ 
            \State (If applicable) Identify parameter MLE $\hat{\theta}_{k(i-1)}$
            \State Sample $x_i \sim p(\cdot | \bestmodel{i-1})$ and add to training data
          \EndFor
          \State Record final model $\bestmodel{N}$
        \EndFor
        \State Return posterior probabilities $p(\mathcal{M}_k \mid \mathcal{D}) = B^{-1} \sum_{b=1}^B \mathbb{1}(\bestmodel{\infty}^{(b)} = \mathcal{M}_k)$
      \end{algorithmic}
    \end{algorithm}
  \end{minipage}
\end{center}

For supervised data, such as observed covariates $X$ and outcomes $Y$, we view the design matrix as fixed, and the target for predictive resampling is the imputation of new outcomes. In other words, we copy $x_{1:n}$ as a block, yielding $x_{n+1:2n}$; predict $y_{n+1:2n}$ using the optimal fitted model; add $(x,y)_{n+1:2n}$ to the observed data; and so forth. This ensures that the resampling process only serves to propagate uncertainty, and does not introduce out-of-distribution bias in the covariate space. Sampling the entire outcome vector in blocks, rather than sampling individual observations, also keeps with the idea of ``repeating the experiment'' and simplifies the update calculation. We defer further discussion of this setup to the illustration of variable selection in Section \ref{sec:varsel}.

As a further note, given identifiable models, we highlight that this approach recovers a standard Bayesian update when using the usual posterior predictive to sample new observations. In other words, for Lines 5 to 8 in Algorithm 1, rather than optimizing and sampling from the best model, we would draw $x_i$ directly from the posterior predictive using the intermediate model mixing weights at step $i$ within that trial. Each of the $B$ trials would eventually converge to a single model; the relative proportions of these trials would converge to the initial model mixing weights as $B \rightarrow \infty$. The predictive framework can therefore be considered a generalization of standard Bayesian model uncertainty.

\subsubsection*{Demonstration}
\label{sec:simplehyp}

Suppose the models we want to compare are two point hypotheses $H_0: \theta = \theta_0$ (represented by $\mathcal{M}_0$) against $H_1: \theta = \theta_1$ (or $\mathcal{M}_1$) for the unknown mean parameter of a normal distribution with known variance $\sigma^2 = 1$. Since there is only a single parameter to be estimated, there is no penalty for model complexity -- we can directly maximize the log-likelihood
\begin{align*}
\bestmodel{n} &= \argmax_k C(\mathcal{M}_{k(n)}, x_{1:n}), \\
C(\mathcal{M}_{k(n)}, x_{1:n}) &= \log(\mathcal{L}) = \sum_{i=1}^{n} \log\,p(x_i \mid \theta_k)    
\end{align*}
over $k \in \{0, 1\}$ to select between hypotheses. Under the predictive resampling approach, we generate a new data point $x_{n+1} \sim \mathcal{N}(\theta_{\widehat{k}(n)}, 1)$, add it to the observed data, find the choice of model which maximizes the likelihood of the augmented data, and repeat the process. Once we have imputed a sufficiently large number of additional samples $N >> n$, we record our final choice of model, then repeat the pipeline, and index our uncertainty over replications of the ``completed'' population. 

We now illustrate the convergence properties of the model choice in this simplest setting. For $k \in \{0, 1\}$, consider the model likelihood
$$L_m(k)=\prod_{i=1}^{m}\frac{\mathcal{N}(x_i \mid \hat{\theta}_k,1)}{\mathcal{N}(x_i \mid \hat{\theta}_{(m-1)},1)}$$
where $\hat{\theta}_{(m-1)}$ maximizes $\prod_{i=1:m-1}\mathcal{N}(x_i\mid\theta,1)$ with $\theta \in \{\theta_0,\theta_1\}$.
Then
$$E\,(L_m(k) \mid x_{1:m-1})=\prod_{i=1}^{m-1}\frac{\mathcal{N}(x_i \mid \hat{\theta}_k,1)}{\mathcal{N}(x_i \mid \hat{\theta}_{(m-1)},1)}\leq 1,$$
and also
$E(L_m(k)\mid x_{1:m-1})\geq L_{m-1}(k)$. Hence, for both $k \in \{0,1\}$ it is that $L_m(k)$ converges due to the martingale convergence theorem. The parameter is therefore selected which maximizes $L_{\infty}(k)$; in the limit, $L_\infty(k)$ will either be 0 or 1. We provide a more general version of this proof in Section \ref{sec:mcproof}.

Experimentally, we generate $n=100$ data points from $\mathcal{N}(0, 1)$ as our observed data, which gives $\bar{x} = 0.066$. We set $\theta_0 = 0$ and vary $\theta_1$ in the range $\{$-0.3, -0.29, -0.28, $\ldots$, 0.3$\}$, and impute an additional $N = n+2500$ samples over $B = 1000$ trials. Figure \ref{fig:jellyLL} demonstrates that this is a sufficiently large sample size for the final model choice to converge, in the example case where $\theta_1 = 0.1$. For the initial iteration, we will always select the model where $\theta_k$ is closer to the observed sample mean of 0.066; however, from the second iteration forwards, as the imputed data points introduce uncertainty, it becomes possible to switch between models. We would expect to see that $H_1$ is chosen most frequently in cases where the alternative mean $\theta_1$ is closer to $\bar{x}$. Figure \ref{fig:demoLL} confirms this: with the baseline case of $\theta_0 = \theta_1 = 0$ marked at 50\% in light blue, $H_1$ is selected more frequently in cases where $\theta_1$ is closer to $\bar{x}$, and less frequently otherwise.

\begin{figure}
\centering
\includegraphics[width=\textwidth]{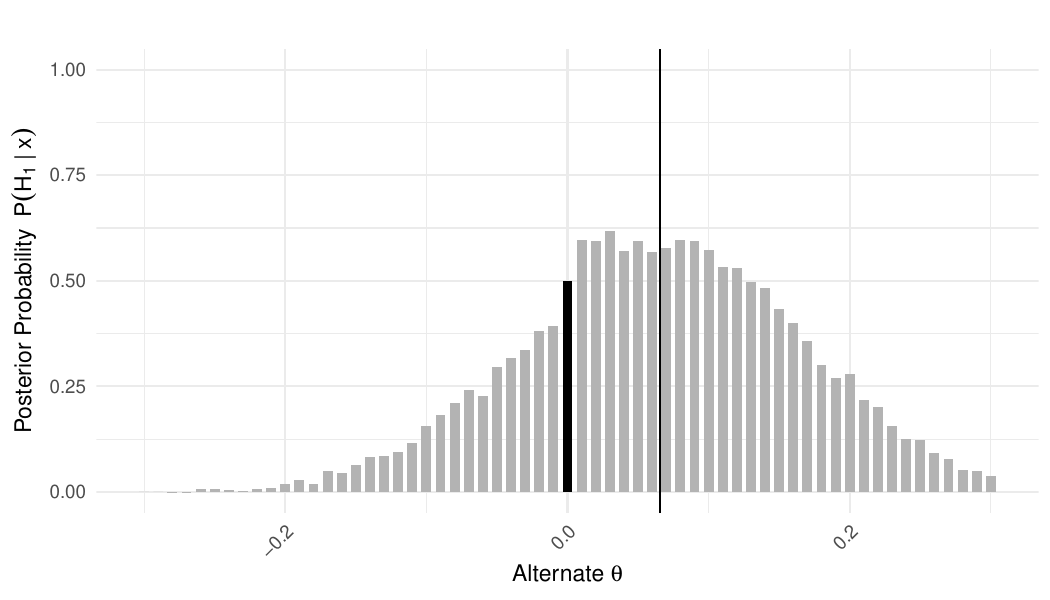}
\caption{Proportion of trials in which alternate model $H_1$ is selected as alternate mean $\theta_1$ varies from -0.3 to 0.3. The observed sample mean $\bar{x}$ is denoted by the vertical line, while the baseline of $\theta_0 = \theta_1 = 0$ is denoted by the black bar.}
\label{fig:demoLL}
\end{figure}

As an alternative means of explanation, Figure \ref{fig:jellyLL} captures the idea that model uncertainty can often be expressed in terms of a decision rule on some population-level summary statistic. In this case, when comparing $H_0: \theta_0 = 0$ against $H_1: \theta_1 = 0.1$, we would expect this decision threshold to be at $\bar{x} = 0.05$. We in fact see a clear divergence between populations with $\bar{x}_N > 0.05$, where we select $H_1$, and populations with $\bar{x}_N < 0.05$, where we select $H_0$. By targeting a probability distribution over the complete population $p(x_{n+1:\infty} \mid x_{1:n})$, what we are really attempting to retrieve is a probability distribution over this summary statistic computed on the infinite population. This partitions the space of possible observable datasets into critical regions, which we can then interpret in the context of model uncertainty.

\subsection{Consistent model selection}

This framework requires the specification of a model selection criterion, which will be used to select the best model at each step for the generation of $x_{n+1}$ given observed $x_{1:n}$, and ultimately to select the best final model for each possible realization of the complete data. The key requirement for this criterion is therefore that it be consistent, i.e. that the probability of selecting the correct model converges to 1 as $N \rightarrow \infty$ \citep{claeskens_model_2008}. The BIC provides consistency, as well as a clear connection to Bayesian inference, and therefore serves as our preferred one-step model selection rule \citep{schwarz_estimating_1978}.

To further motivate the BIC, we can consider its more general interpretation in the context of predictive evaluation. A scoring rule $S$ is a general summary measure for a probabilistic forecast \citep{matheson_scoring_1976}. Consider the score which is defined as the sum of the individual log-predictive probabilities
\begin{equation}
S(x_{1:n}, \mathcal{M}) = \sum_{i=1}^n \log p_{\mathcal{M}}(x_i \mid x_{1:i-1})
\end{equation}
with the data evaluated as if they had appeared in sequence. This is a proper scoring rule \citep{gneiting_strictly_2007} in that its expected value is maximized when the true distribution is used for forecasting. In fact, \citet{fong_marginal_2020} show that it is the unique scoring rule which guarantees coherent model evaluation. 

\citet{dawid_present_1984} shows that this scoring rule is equivalent to the Bayes factor, which the BIC directly approximates. This means that the difference in scores between models $\mathcal{M}_1$ and $\mathcal{M}_2$
\begin{equation}
\log BF = S(x_{1:n}, \mathcal{M}_1) - S(x_{1:n}, \mathcal{M}_2)
\end{equation}
can be interpreted as the ``weight of evidence'' in favor of the first model \citep{good_rational_1952}. The BIC therefore provides a general method for comparing forecasting rules \citep{gneiting_strictly_2007}, supporting its use as a tool to select the best model for a one-step predictive update $p(x_{n+1} \mid x_{1:n})$. 

We now consider other common approaches for model comparison. Cross-validation splits the observed data (of size $n$) into two parts, uses the first $n-p$ samples to train the model, and then evaluates predictive performance on the held-out validation set of $p$ samples. Full leave-$p$-out cross-validation requires fitting the model $\binom{n}{p}$ times, which can be computationally infeasible, and so simplifications such as $k$-fold cross-validation (dividing the data into blocks of size $n/k$ for testing) are often preferred \citep{geisser_predictive_1975}. The marginal likelihood is equivalent to exhaustive leave-$p$-out cross-validation averaged over all values of $p$ and all possible test sets, providing an underlying link between these approaches \citep{fong_marginal_2020, gneiting_strictly_2007}.

One of the most common simplifications is leave-one-out cross-validation (LOO-CV), where the model is fit and evaluated $n$ times. This is equivalent to $k$-fold cross-validation with $k=n$. However, \citet{shao_linear_1993} shows that LOO-CV is asymptotically inconsistent for linear models, meaning it should not be applied in a context where our eventual target is a decision based on the complete data. This can be rectified by using leave-$p$-out cross-validation with $p/n \rightarrow 1$ as $n \rightarrow \infty$, where the size of the validation set grows alongside the sample size, though this again faces computational constraints as $n$ grows. Intuitively, this is necessary because a larger validation set provides a smoother assessment of prediction error; optimizing the fit on a single observation at a time can select unnecessarily large models. \citet{yang_consistency_2007} extends these findings to nonparametric models, while \citet{vehtari_bayesian_2002} motivate cross-validation in the Bayesian context. \citet{vehtari_practical_2017} and \citet{sivula_uncertainty_2023} discuss Bayesian LOO-CV; it is noted to be unreliable in certain common use cases, such as comparing similar models or misspecified models. 

The AIC approximates predictive fit, given by
\begin{equation}
\text{AIC} = 2d-2\log \widehat{\mathcal{L}}
\end{equation}
where the penalty for model complexity is fixed, unlike the variable penalty of $\log n$ applied in the BIC \citep{akaike_new_1974}. However, the AIC is asymptotically equivalent to LOO-CV, and so it is also asymptotically inconsistent \citep{shao_linear_1993}, for the similar reason that it tends to overfit to models that are too complex. 

The Lasso encourages sparsity in a linear model by imposing an $\mathcal{L}_1$ penalty on the regression coefficients $\beta$, which performs variable selection by driving a subset of coefficients to zero \citep{tibshirani_regression_1996}. For regression data $\{X, y\}$, this is done with the objective function
\begin{equation}
\min_{\beta} \left( \sum_{i=1}^{n} (y_i - X_i \beta)^2 + \lambda \sum_{j=1}^{p} |\beta_j| \right)
\end{equation}
where $\lambda$ controls the strength of the penalty. However, \citet{zhao_model_2006} show that the model selection provided by Lasso is only consistent under certain conditions relating to the covariance of $X$ (specifically, that the variables outside the true model cannot be represented by those in the true model). 

Overall, our approach is grounded in the idea that statistical uncertainty stems from missing data, and if complete data were available, we could reliably identify the correct model. For well-calibrated uncertainty propagation, we need a criterion that can consistently choose the right model once the missing observations are recovered. While this does not necessarily have to be the BIC, it should not be methods like LOO-CV, AIC, or Lasso, as they lack the guarantee of asymptotic consistency. In Section \ref{sec:varsel}, we demonstrate in the context of variable selection that AIC and Lasso select overly complex models as expected when used for predictive resampling in the context of variable selection.

\section{Hypothesis testing as a decision problem}
\label{sec:hyptest}

We now return to hypothesis testing, perhaps the most prevalent model uncertainty question in the statistical literature. We begin with a brief discussion of certain issues in both frequentist and Bayesian hypothesis testing, along with recent work on $e$-values. In particular, we focus on how the interpretation of these results can often be unclear or counter-intuitive. We then demonstrate how predictive resampling frames hypothesis testing as a decision problem on a population statistic, propagating uncertainty through the missing data to directly quantify the probability of competing hypotheses, without requiring the specification of a prior.

\subsection{Frequentist methods and $p$-values}

Suppose we are testing $H_0: \theta \in \Theta_0$ against $H_1: \theta \in \Theta_1$ for some parameter $\theta$ whose true value is unknown. Generally, we would define our test such that $\Theta_0 \cap \Theta_1 = \emptyset$ and $\Theta_0 \cup \Theta_1 = \Theta$. A classical (frequentist) procedure constructs a critical region $C$ such that the null hypothesis $H_0$ is rejected if the observed sample $Y_{1:n}$ falls within $C$ and not rejected if $Y \notin C$ \citep{berger_could_2003}. Typically, this is done by calculating a test statistic $T = t(Y)$ as a function of the observed sample, then specifying the critical region as $C = \{y \mid t(y) > t_0\}$. The critical value $t_0$ is set using a pre-specified significance level of $1-\alpha$ such that $\alpha = P_0(T > t_0)$ is the selected Type I error rate, or the probability of erroneously rejecting the null hypothesis when it is actually true. 

A key aspect of this approach is that it focuses solely on the observed $Y_{1:n}$ to calculate the test statistic, which is then treated as a random variable. Within this construction, the $p$-value $p = P_0(T > t(y))$ can be understood as the probability under the null hypothesis of obtaining a test statistic which is ``at least as extreme'' as the observed value. This probability is interpreted with respect to the variation across all possible samples of size $n$ in the population from which $Y$ is drawn. For a specific observed sample, having $t(y) > t_0$ is then equivalent to having $p < \alpha$, and so hypothesis tests are often discussed in terms of whether the $p$-value falls below the chosen significance threshold.

However, a well-known issue with rejecting the null hypothesis based on $p < \alpha$ at sample size $n$ is that this conflates the effect size with the sample size \citep{nickerson_null_2000, gelman_difference_2006}. For example, suppose we are testing $H_0: \theta=0$ against $H_1: \theta \neq 0$ for the unknown mean parameter of a normal distribution with known variance $\sigma^2 = 1$. For an observed sample $Y_{1:n}$, the effect size would be calculated as the mean difference from the null hypothesis value of 0, or $\bar{y}-0 = \bar{y}$. As the sample size increases to $\infty$, the distribution of possible values of the test statistic narrows with respect to the effect size, ultimately converging to a point mass at 0. Accordingly, since the tail area of this distribution contains fewer ``extreme'' outcomes, a vanishingly small effect size becomes sufficient to establish a ``significant'' difference between $H_0$ and $H_1$, even when that difference may be below a threshold of practical relevance.  Figure \ref{fig:pvn} demonstrates how, as the sample size grows, a $p$-value below the standard 0.05 threshold is returned for smaller and smaller effect sizes. As $n$ approaches infinity, any point null hypothesis will be rejected with probability 1. \citet{jeffreys_theory_1961} famously summarizes this result, emphasizing why $p$-value computations based on tail area are illogical: ``...a hypothesis that may be true may be rejected because it has not predicted observable results that have not occurred.''

\begin{figure}
\centering
\includegraphics[width=\textwidth]{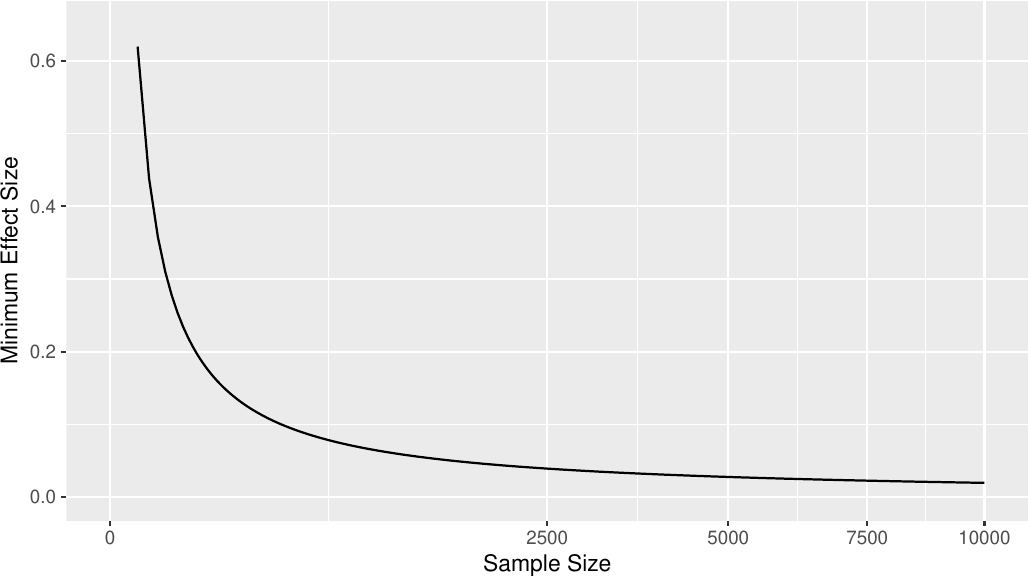}
\caption{Minimum effect size $\bar{y}-\theta$ sufficient with sample size $n$ to reach ``significant'' conclusion of $p<0.05$ for simple one-sample z-test.}
\label{fig:pvn}
\end{figure}

\subsection{Bayesian methods and $e$-values}

Meanwhile, Bayesian hypothesis testing can face challenges related to the precise specification of prior densities $\pi_0$ and $\pi_1$ for the null and alternative hypotheses respectively. A famous illustration is the Jeffreys-Lindley paradox \citep{lindley_statistical_1957, jeffreys_theory_1961}. Briefly, consider testing the null model $H_0: \theta = 0.5$ against the alternative model $H_1: \theta \neq 0.5$ for a binomial proportion, such as the fraction of coin flips that come up heads. Under a uniform prior of $\pi_0 = \pi_1 = 0.5$, $H_0$ will be preferred even when the observed data would support a frequentist $p$-value well below 0.05; intuitively, this is because $H_1$ is much more diffuse, with prior probability spread across $\theta \in \left[0, 1\right]$, leading to a relatively lower posterior probability regardless of the observed data. However, \citet{johnson_use_2010} notes that the opposite issue arises when tests are defined such that $\pi_1$ is positive on $\Theta_0$, which they refer to as a ``local alternative prior density". The lack of a significant separation between the two hypotheses results in evidence being accumulated much more rapidly to support true alternative models than to support true null models, making it asymmetrically difficult to reject $H_0$.

In recent years, an approach for Bayesian-adjacent hypothesis testing has emerged under the name of ``$e$-values'', primarily motivated by this problem of combining results from multiple studies \citep{vovk_e-values_2021, grunwald_safe_2024}. The $e$-value is closely linked to the likelihood ratio, and certain optimal $e$-values are identical to Bayes factors with specifically chosen priors \citep{grunwald_safe_2024}. \citet{shafer_testing_2021} discusses a very similar concept using the terminology of ``betting scores'', arguing that the $e$-value is most naturally interpreted as the monetary winnings from making a bet against the null hypothesis.

The $e$-value provides several concrete benefits with respect to accumulating evidence against a null hypothesis, where the results from sequential experiments can be combined while still maintaining strict control of Type I error rate. However, the direct interpretation of an $e$-value remains unclear. Generally, for a test based on an infinite sample, we would want a test statistic $T_n \in (-\infty, \infty)$ to behave as $T_n \to -\infty$ under the null and $T_n \to \infty$ under the alternative. If the convergences are at the same rate, then the critical value can be set at 0. For example, if we are testing a normal mean with $H_0:\theta=0$, then one such test statistic is
$$T_n=\log (\bar{y}_n^2\,\sqrt{n})=\left\{
\begin{array}{ll}
\half\log n+ \log(\bar{y}_n^2) & H_1:\theta\ne 0 \\
-\half\log n+\log (n\bar{y}_n^2) & H_0:\theta=0.
\end{array}
\right.$$
Under $H_1$, $T_n$ goes to infinity at rate $\log n$ since $\bar{y}_n^2$ is converging to a non-zero finite constant, whereas under $H_0$, $T_n$ goes to minus infinity at rate $-\log n$ since $n\bar{y}_n^2$ is standard normal and therefore finite. Hence, a suitable critical value is 0. This amounts to a modified version of the BIC. 

In contrast, the $e$-value starts with a statistic $T_n$ and the assumption that under the null, $E(T_n) \leq 1$. The next step often still relies on heuristic evaluation; \citet{vovk_e-values_2021} resort to the rule of thumb provided by \citet{jeffreys_theory_1961} for Bayes factors (values between 1 and $\sqrt{10}$ are ``not worth more than a bare mention''; between $\sqrt{10}$ and 10 are ``substantial''; etc.)

In contrast, predictive resampling can provide a direct probability both for and against a null hypothesis, without requiring the specification of prior probabilities. Having observed data $Y_{1:n}$, we do not take the usual frequentist approach of viewing it as one possible realization of a random variable, but also do not assign a prior probability over the hypothesis space. Instead, as before, we focus on the missing $Y_{n+1:\infty}$ as the source of uncertainty, where the decision rule for selecting a hypothesis is based on some summary statistic of the complete population. We compare the null and alternative hypotheses at each step (by a consistent model selection criterion such as BIC) and sample a new observation from the preferred hypothesis, including any necessary parameter MLE(s). We then index the posterior probability of each hypothesis over repeated replications of the complete data.

\subsubsection*{Demonstration}

To demonstrate, consider the previous point null hypothesis test of $H_0: \theta = \theta_0$ against $H_1: \theta \neq \theta_0$ for the unknown mean parameter of a normal distribution with known variance $\sigma^2 = 1$. The $p$-value is calculated using a classical two-sided z-test. Following \citet{vovk_e-values_2021}, we calculate the $e$-value as the likelihood ratio
$$E = \frac{e^{-(y-\delta)^2/2}}{e^{-y^2/2}} = e^{y\delta - \delta^2/2}$$
which can be transformed to the $[0, 1]$ window if desired using the ``e-to-p calibrator'' function $p = \min(1, 1/e)$. 

For the predictive resampling approach, the null model is fully determined, while the alternative model has one free parameter ($\theta = \bar{y}$) for a BIC penalty term of $d=1$. For observed data $y_1, \ldots, y_n$, comparing the BIC therefore reduces to rejecting $H_0$ if $n\bar{y}^2_n>\log n$, and vice versa. At each step, we sample $y_m$ from $\mathcal{N}(0,1)$ if $(m-1)\bar{y}^2_{m-1}<\log(m-1)$ and from $\mathcal{N}(\bar{y}_{m-1},1)$ otherwise. The resulting sequence 
$$h_{m+1}=1(m\bar{y}^2_{m}>\log m),\quad m>n$$
converges to 0 or 1. We repeat this up to the large sample size of $N = n + 20n$ across several Monte Carlo iterations (in this case, 1000) to index our uncertainty between the two hypotheses. In the supplementary material, we provide convergence diagrams showing that this value of $N$ is sufficiently large for the choice of model to converge.

\begin{table}
\caption{\label{tab:hyptest}Metrics for two-sided hypothesis testing demonstration across 100 random seeds}
\centering
\begin{tabular}{cccccc}
\toprule
\multirow{2}{*}{True model} & \multirow{2}{*}{Summary metric} & \multicolumn{4}{c}{Sample size} \\ 
    & & 30 & 100 & 300 & 1,000 \\ 
\midrule
\multirow{5}{*}{$H_0$ ($\mu = 0$)} & Prop. of tests with $p<0.05$ (Type I error) & 6\% & 2\% & 7\% & 4\% \\
    & Prop. of tests with $e>10$ (Type I error) & 3\% & 1\% & 5\% & 3\% \\
    \cmidrule{2-6}
    & Average resampling posterior prob. of $H_1$ & 0.15 & 0.07 & 0.05 & 0.03 \\ 
    & Prop. of tests with $P(H_1 \mid \mathcal{D}) > 0.05$ & 83\% & 34\% & 19\% & 8\% \\
    & Prop. of tests with $P(H_1 \mid \mathcal{D}) > 0.1$ & 37\% & 14\% & 11\% & 4\% \\
\midrule
\multirow{5}{*}{$H_1$ ($\mu = 0.1$)} & Prop. of tests with $p<0.05$ (Power) & 7\% & 18\% & 37\% & 85\% \\
    & Prop. of tests with $e>10$ (Power) & 5\% & 13\% & 29\% & 81\%  \\
    \cmidrule{2-6}
    & Average resampling posterior prob. of $H_1$ & 0.17 & 0.18 & 0.27 & 0.64 \\
    & Prop. of tests with $P(H_1 \mid \mathcal{D}) > 0.5$ & 10\% & 13\% & 22\% & 65\% \\
    & Prop. of tests with $P(H_1 \mid \mathcal{D}) > 0.9$ & 4\% & 4\% & 10\% & 44\% \\
\bottomrule
\end{tabular}
\end{table}

We first generate data under the null, i.e. from $\mathcal{N}(0, 1)$, for sample sizes $n = \{30, 100, 300, 1000\}$ across 100 random seeds. Summary metrics for the three methods described above can be found in Table \ref{tab:hyptest}. As expected, the classical $p$-values under the null are uniformly distributed across $[0,1]$ regardless of the observed sample size. Since the $p$-values cannot be directly interpreted as probabilities, we interpret them as a binary decision using a pre-specified Type I error rate, usually $\alpha = 0.05$. In the first row of Table \ref{tab:hyptest}, we see that the proportion of tests with $p<0.05$ is approximately 5\% for all values of $n$. For the $e$-values, as recommended by \citet{vovk_e-values_2021}, we apply Jeffreys' rule of thumb, under which $e > 10$ indicates that the evidence against the null hypothesis is ``strong". The proportion of tests with $e>10$ also remains at approximately the same level for all values of $n$.

In contrast, predictive resampling allows us to accumulate evidence \emph{in favor of the null} as the observed sample size grows. In the third row of Table \ref{tab:hyptest}, we see that the average posterior probability of $H_1$ returned by resampling decreases with $n$. This reflects the underlying principle that missing data is the source of statistical uncertainty, and that observing additional data consistent with a given model (in this case, $H_0$) should result in a greater degree of certainty about that model. Rather than requiring a heuristic rule to provide a binary decision, the predictive framework enables a richer expression of uncertainty over different possible completions of the data, which can be directly interpreted as a posterior probability.

If desired, we can also apply a decision rule, and record the proportion of trials for which the posterior probability of $H_1$ exceeds a certain level. In the fourth and fifth rows of Table \ref{tab:hyptest}, we see that under the null hypothesis, the proportion of tests with $P(H_1 \mid \mathcal{D})$ above certain thresholds (in this case, 0.05 and 0.1) decreases with $n$. Conversely, this means that increasing the sample size results in a greater proportion of tests with at least 0.95 and 0.9 posterior probability on $H_0$ respectively.

As a different mode of visualization, in Figure \ref{fig:h0demo} we plot the classical $p$-values for each seed on the horizontal axis, and the resampling posterior probabilities of $H_0$ on the vertical axis. The X marks on the plots indicate tests with $p<0.05$ where classical testing would reject $H_0$, and the O marks indicate $p>0.05$ for which classical testing would fail to reject $H_0$. We again see that the $p$-values are invariant to sample size, but that the overall level of the resampling probabilities increases towards 1 as the sample size grows. In the supplementary material, we provide a corresponding plot comparing $e$-values with predictive resampling; the interpretation is largely similar.

Alternatively, in the bottom half of Table \ref{tab:hyptest}, we generate the same sample sizes from $\mathcal{N}(0.1, 1)$ under $H_1$. All three methods successfully accumulate evidence against the null as $n$ increases; this can be seen visually in Figure \ref{fig:h1demo}, where the points gradually migrate towards the bottom and left as we observe more data. (The corresponding plot for $e$-values is in the supplementary material.)

\begin{figure}
\centering
\includegraphics[width=\textwidth]{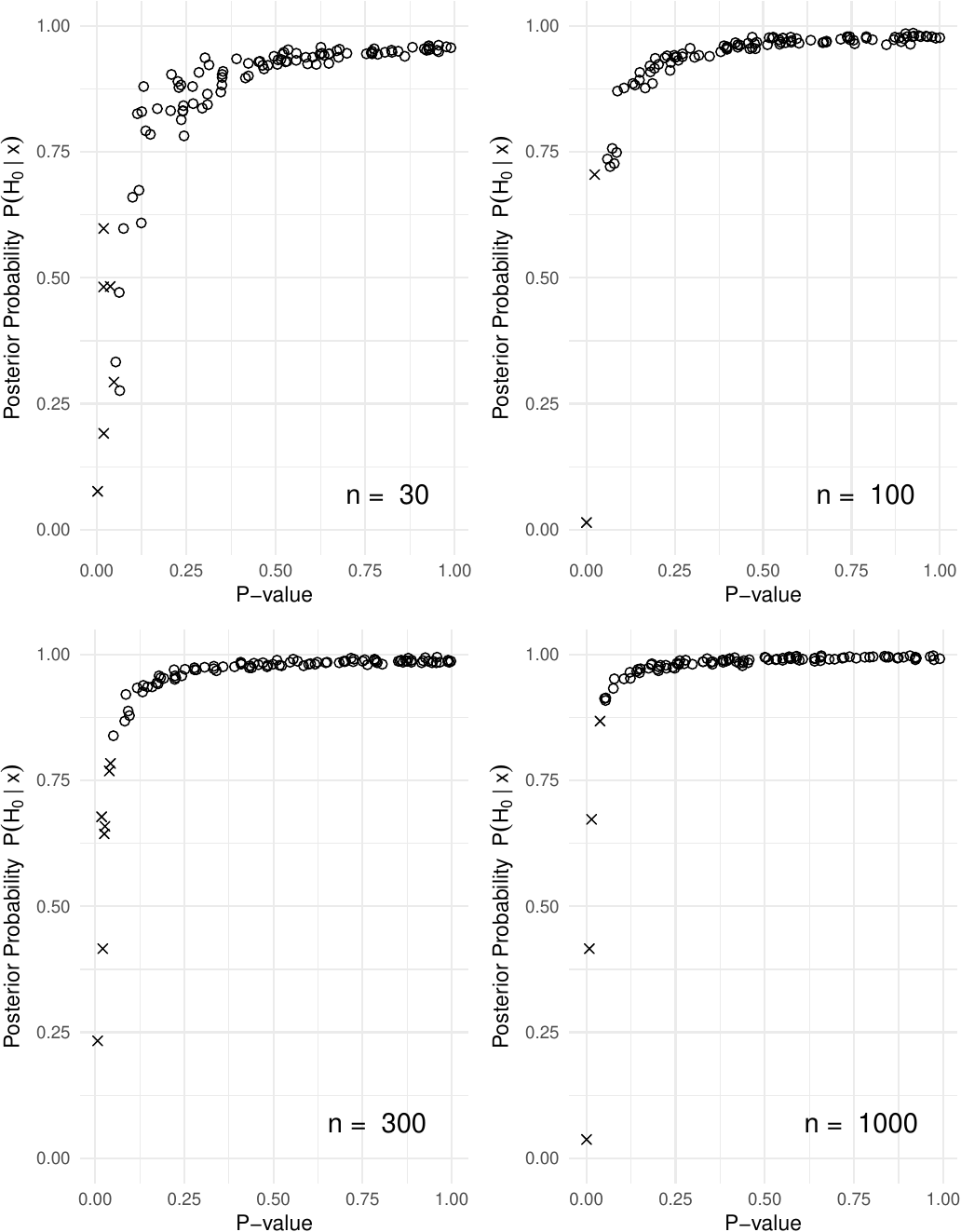}
\caption{Observed $p$-value vs. resampling posterior probability of $H_0$ for data generated under the null $\mathcal{N}(0, 1)$ across 100 random seeds. X denotes tests with $p < 0.05$ where classical testing rejects $H_0$, and O denotes tests with $p > 0.05$.}
\label{fig:h0demo}
\end{figure}

\begin{figure}
\centering
\includegraphics[width=\textwidth]{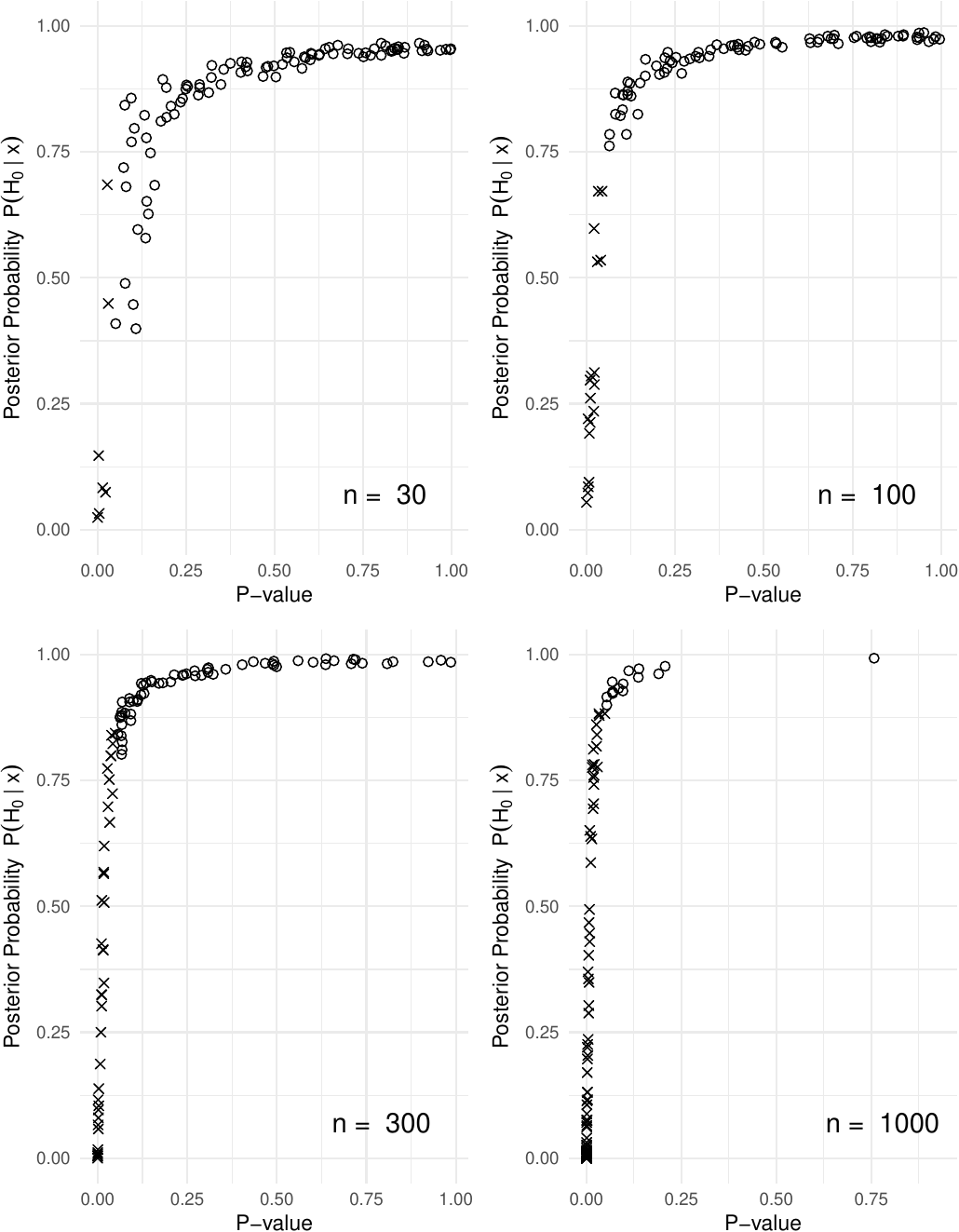}
\caption{Observed $p$-value vs. resampling posterior probability of $H_0$ for data generated under the alternative $\mathcal{N}(0.1, 1)$ across 100 random seeds. X denotes tests with $p < 0.05$ where classical testing rejects $H_0$, and O denotes tests with $p > 0.05$.}
\label{fig:h1demo}
\end{figure}

In this context, since $H_1$ is the ``true'' underlying model, we would interpret the $p$-value and $e$-value results in terms of statistical power. As expected, both increase in power as $n$ grows, based on the proportion of tests with $p<0.05$ or $e>10$ respectively. However, this calculation is only possible because we are simulating several datasets. The concept of power does not directly translate to a single trial, where we only get one chance to observe the outcome; it is inherently probabilistic, estimating the likelihood that a study will detect a certain effect across many repetitions of the same trial. This can be observed in applied statistical practice, where power calculations generally have to resort to simulation.

In contrast, the average posterior probability of $H_1$ returned by predictive resampling also grows as $n$ increases, but the results for a single trial can be directly interpreted. Rather than having to interpret our outcomes in the context of long-run Type I and Type II error rates over repeated trials, we can index the probability of $H_1$ for each trial independently, providing a clearer picture of uncertainty. However, as before, we can also apply a decision rule if desired, and consider the proportion of individual trials for which $P(H_1 \mid \mathcal{D})$ exceeds a given level. To ``accept" the alternative hypothesis, we might specify a higher threshold (such as 0.5 or 0.9); at the bottom of Table \ref{tab:hyptest}, we see that these proportions increase with $n$.

\section{Convergence of one-step updates}

In this section, we demonstrate the convergence of the model choices returned by predictive resampling. We begin with a brief discussion of the martingale posterior distribution framework from \citet{fong_martingale_2024}, then extend it to one-step model updates.

\subsection{Exchangeability and martingales}

Suppose we observe $Y_{1:n}$ from an unknown true sampling distribution and wish to conduct inference on some parameter $\theta$. The typical Bayesian approach is to elicit a prior density $\pi(\theta)$ and likelihood function $f(y \mid \theta)$, then use this to derive the posterior density $\pi(\theta \mid y_{1:n})$. However, the statistical uncertainty in our posterior arises entirely from the fact that we only observe a sample of size $n$ and are therefore missing observations. If we knew the values $Y_{n+1:\infty}$, then any identifiable $\theta$ would be fully defined as some function of the complete data. An alternative to the standard approach is therefore to directly model the joint density $p(y_{n+1:\infty} \mid y_{1:n})$ and impute the missing observations, allowing us to recover the parameter of interest from the completed dataset. 

Under this construction, once we have imputed $Y_{n+1:N}$ at any intermediate point, we denote the estimated posterior mean as $\bar{\theta}_N = E(\theta \mid Y_{1:N})$. \citet{doob_application_1949} shows that if $\bar{\theta}_N$ is a martingale with
\begin{equation}
    E(\bar{\theta}_N \mid Y_{1:N-1}) = \bar{\theta}_{N-1}
\end{equation}
then, under certain identifiability and measurability conditions, $\bar{\theta}_N \rightarrow \theta$ almost surely as $N \rightarrow \infty$. In other words, the imputation of data $Y_{n+1:\infty}$ using the posterior predictive returns the same uncertainty as the typical prior-posterior calculation on $\theta$, and our imputed $\bar{\theta}_\infty$ is a draw from the posterior $\pi(\theta \mid y_{1:n})$.

To put this approach in context, recall that by the representation theorem of \citet{de_finetti_prevision_1937}, the typical Bayesian likelihood-prior setup comes from the assumption of exchangeability, where the joint probability of $Y_{1:m}$ for all $m$ does not depend on the ordering of the observations. The key insight of the result above is that we can weaken or relax exchangeability and still conduct full Bayesian inference. Specifically, Doob's martingale condition requires only that the imputed observations $y_{n+1:\infty}$ are conditionally identically distributed (c.i.d.), as described by \cite{berti_limit_2004}, with all future data points being identically distributed conditional on the observed $Y_{1:n}$. 

Without exchangeability, we lose the likelihood and prior, and so we instead turn to a predictive representation of Bayesian inference. Using the one-step update
\begin{equation}
    p(y_{n+1} \mid y_1, \ldots, y_n) = \frac{p(y_1, \ldots, y_n, y_{n+1})}{p(y_1, \ldots, y_n)}
\end{equation}
we view Bayes purely through the sequential learning of predictive distribution functions \citep{dawid_present_1984}. Having observed the first data point $y_1$, we update our beliefs in order to predict $p(y_2 \mid y_1$); after observing $y_2$, we again update our beliefs to predict $p(y_3 \mid y_1, y_2)$; and so on. We continue the process to model 
\begin{equation}
    p(y_{n+1:\infty} \mid y_{1:n}) = \prod_{i=n+1}^\infty f(y_i \mid y_1, \ldots, y_{i-1})
\end{equation}
and generate $y_{1:\infty} = (y_{1:n}, y_{n+1:\infty})$, from which we can then calculate $\theta$ as some functional of the data.

\citet{fong_martingale_2024} bring these insights together to construct a general predictive resampling process for the posterior uncertainty in $\theta$ with i.i.d. data. Under the previous result from Doob, martingales are applied to ensure convergence of the parameter of interest. The pipeline consists of two steps: simulating $Y_{n+1:\infty}$ with one-step predictive updates, as seen in Equation \ref{eq:update1}, and solving for the parameter $\theta$ from the simulated complete information via the limiting point estimate $\bar{\theta}_\infty$. To recover a general parameter of interest, the random limiting empirical distribution function $F_\infty$ can be used to compute $\theta_\infty = \theta(F_\infty)$, which is designated as the martingale posterior distribution.

Besides de Finetti, several prior works have similarly emphasized uncertainty quantification through observables rather than parameters. \citet{geisser_predictive_1975, geisser_aspects_1982} argues that exchangability implies the notion of parameters as relevant constructs, but that prediction is more appropriate than parametric inference since results are expressed in terms of actual observables. \citet{dawid_well-calibrated_1982, dawid_present_1984} goes one step further to argue that prediction should be the primary focus of statistical inference, and that the goal at any point is to manipulate the currently observed data in order to form a well-calibrated probability distribution for the next observation.

\citet{roberts_probabilistic_1965} presents a version of the predictive argument for finite data, noting that any statement about finite population parameters can be reinterpreted as a predictive statement about the unobserved part of the population. The simplest model for this broad approach (repeatedly sampling observables, then estimating parameters with the complete data) will be familiar to readers, where the induced distribution function on the object of interest is given by 
\begin{equation}
F(y)=n^{-1}\sum_{i=1}^n w_i\,1(y_i \leq y)
\end{equation}
with random weights $w_i \sim \text{Dirichlet}(1, \ldots, 1)$, and so the explicit sampling of $y_{n+1:\infty}$ is not actually required. This is the Bayesian bootstrap of \citet{rubin_bayesian_1981}, and the corresponding $p(y_{n+1:\infty} \mid y_{1:n})$ is known as the P\'olya urn sampling scheme.

In Bayesian nonparametrics, \citet{fortini_predictive_2012, fortini_quasi-bayes_2020} analyze the predictive construction of underlying models, where resampling is applied to retrieve the prior law of the mixing distribution. \citet{berti_class_2021} discuss a general class of models building on c.i.d. sequences, while \citet{hahn_recursive_2018} provide a fast online approach for sequential updates that makes use of bivariate copulas. Instructive reviews of the predictive approach to Bayesian inference are given by \citet{berti_probabilistic_2023} and \citet{fortini_exchangeability_2024}; as above, these rely on the overall view of Bayes as a sequential learning problem, where the goal is to specify one-step marginal predictive updates without having to rely on the usual prior-posterior pipeline.

\subsection{Extension to model uncertainty}
\label{sec:mcproof}

In the context of model selection, we now have candidate models $\{\mathcal{M}_k\}_{k=1}^K$ for finite $K$. Given observed data $x_{1:m}$, let $\bestmodel{m}$ be the optimal candidate model under a consistent model selection criterion, with any necessary parameter MLE(s) denoted by $\bestparam{m}$. We sample $x_{m+1}$ from the predictive $p(\cdot \mid \bestmodel{m}, \bestparam{m})$, append it to our dataset yielding $x_{1:m+1}$, and repeat this process. 

As before, the key requirement is that the model choice $\widehat{k}(m)$ converges to some $k(\infty)$ as the sample size grows from $m \rightarrow \infty$. We demonstrate this first for the case where the intermediate models are selected by maximum likelihood, and then for the case where this likelihood is penalized (as in the AIC, BIC, or LASSO). This is a general version of the specific proof provided in the demonstration from Section \ref{sec:mcmethod}.

\begin{prop} 
The model $\bestmodel{m}$ selected by sequential maximum likelihood converges as $m \to \infty$.
\end{prop}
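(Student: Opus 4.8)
The plan is to exploit the finiteness of the candidate set: the choice $\widehat{k}(m)$ converges as soon as, for every pair of models $j\neq k$, the comparison $C(\mathcal{M}_j,x_{1:m})-C(\mathcal{M}_k,x_{1:m})$ eventually keeps a fixed sign, so I would first reduce the problem to showing that each pairwise log-likelihood-ratio process stabilizes and then combine the finitely many pairwise statements. The central object is the likelihood-ratio process generalizing $L_m(k)$ from the demonstration in Section~\ref{sec:mcmethod}: for each model $k$ I would track the ratio of its likelihood to the likelihood under the density that actually generated the imputed data. The feature that makes everything work is that, by construction, $x_m$ is drawn from the current best model's predictive $p(\cdot\mid\bestmodel{m-1},\bestparam{m-1})$, so this generating density is exactly what sits in the denominator.

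Next I would establish the martingale structure. Writing $g_{m-1}$ for the generating density of $x_m$ and fixing a competitor $k$ with sequential maximum-likelihood estimate $\hat\theta_{k,(m-1)}$, the one-step increment of the ratio process is $p_k(x_m\mid\hat\theta_{k,(m-1)})/g_{m-1}(x_m)$, and integrating this density ratio against $g_{m-1}$ gives exactly $1$. Hence the increments are conditionally mean-one given $x_{1:m-1}$, the ratio process is a nonnegative martingale with unit expectation, and the martingale convergence theorem yields almost-sure convergence to a finite limit --- precisely as in the two-point demonstration, where the bound $E(L_m(k)\mid x_{1:m-1})\leq 1$ forces convergence. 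If the criterion maximizes over the within-model parameter at each step rather than plugging in the sequential estimate, the numerator's dependence on $x_m$ turns the martingale into a nonnegative supermartingale, which still converges almost surely. Taking ratios of the limits for two competitors then shows that the pairwise log-likelihood ratio converges almost surely in $[-\infty,+\infty]$.

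The final step is to read off convergence of the selection from convergence of these ratios, and this is where I expect the main difficulty to lie. Almost-sure convergence of each ratio is not quite enough on its own: the argmax could in principle flicker forever between two models whose ratios converge to the same value, so I must show that the limit lands strictly on one side of the tie value $0$ --- equivalently, that the limiting $L_\infty(k)$ take the boundary values, as the demonstration asserts with its ``$0$ or $1$'' claim. The mechanism I would invoke is self-reinforcement: whenever model $k$ is currently selected the next observation is sampled from $k$, so the expected increment of $\log L_m(k)$ against a competitor is a nonnegative Kullback--Leibler term, pushing a currently-favored model to remain favored rather than to drift back to a tie. Making this rigorous --- ruling out a null set of perpetual oscillation, and, in the genuinely maximized-likelihood case, accounting for the model-complexity term of order $\tfrac{d_k}{2}\log m$ that separates the batch criterion from the martingale and can dominate a convergent ratio --- is the part that will require the most care; the martingale convergence itself is routine once the generating-density denominator is in place.
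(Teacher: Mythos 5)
Your proposal follows essentially the same route as the paper's proof. The paper defines, for each candidate $k$, the normalized ratio $L_m(k)$ whose denominator is the likelihood under the previous best model $p(\cdot \mid \bestmodel{m-1}, \bestparam{m-1})$ --- exactly your observation that the generating density of $x_m$ should sit in the denominator --- establishes a nonnegative supermartingale via the optimality of $\bestmodel{m-1}$ on $x_{1:m-1}$, invokes supermartingale convergence for each of the finitely many $k$, and then takes $k(\infty)$ to maximize $L_\infty(k)$. Your pairwise-ratio reduction is equivalent to the paper's common-denominator normalization, so in structure the two arguments coincide.

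Two further remarks. First, the difficulties you isolate in your final paragraph are genuine, but you should know the paper does not resolve them either: it concludes $\widehat{k}(m) \to k(\infty)$ from $|K|<\infty$ alone, which tacitly assumes the maximizer of $L_\infty(k)$ is almost surely unique and thus does not rule out the flickering you describe; and its supermartingale inequality $E[L_m(k) \mid x_{1:m-1}] \le L_{m-1}(k)$ is deduced only from the denominator comparison, silently ignoring that the numerator uses $\param{m}$, re-maximized over data that include $x_m$. That step is airtight when the candidates are fixed densities (as in the two-point demonstration, where the increment integrates to exactly one), but for models with free parameters the re-maximization inflates the numerator, and the stated inequality does not follow from the displayed comparison alone. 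So your worries mark gaps in the published proof as much as in your plan. Second, one directional slip you should correct: re-maximizing the within-model parameter over data including $x_m$ makes the conditional expectation of the increment \emph{exceed} one, pushing toward a submartingale, not a supermartingale as you claim in your second paragraph. Your closing remark about the $\tfrac{d_k}{2}\log m$ Wilks-type excess is the correct account of this effect and implicitly contradicts that earlier claim; for fixed-density candidates the issue vanishes and your plug-in martingale argument is precisely what is needed.
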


\begin{proof} 
The (maximum) likelihood for model $k$ at stage $m$ is given by
$$\prod_{i=1}^m p(x_{i} \mid \model{m}, \param{m})$$
which we normalize with respect to the previous best model
$$L_m(k)=\prod_{i=1}^m \frac{p(x_i \mid \model{m}, \param{m})}{p(x_i \mid \bestmodel{m-1}, \bestparam{m-1})}$$
We know that
$$\prod_{i=1}^{m-1} p(x_i \mid \bestmodel{m-1}, \bestparam{m-1}) \geq \prod_{i=1}^{m-1} p(x_i \mid \bestmodel{m-2}, \bestparam{m-2})$$
by the definition of $\bestmodel{m-1}$, and so
$E[L_m(k) \mid x_{1:m-1}] \leq L_{m-1}(k)$. 
By the convergence of non-negative supermartingales, $L_m(k)$ then converges for each $k \in K$. Since $|K|<\infty$ it follows that if we take $k(\infty)$ to maximize $L_\infty(k)$, then $\widehat{k}(m) \to k(\infty)$ almost surely.
\end{proof}

Generally, a model is selected not only by maximizing the likelihood function, but with some additional notion of penalty, as seen in the AIC, BIC, and Lasso. In each case, the model and parameter are selected which maximize
$$c(m, d_k, \theta_{k(m)}) \, \prod_{i=1}^m p(x_i \mid \model{m}, \theta_{k(m)})$$
where $c$ is a penalty function in terms of the sample size $m$, model dimension $d$, and parameter $\theta$. For AIC, $c(m,d,\theta)=e^{-d}$; for BIC, $c(m,d,\theta)=e^{-d\log m}$; and for Lasso, $c(m,d,\theta)=e^{-\lambda_m|\theta|}$, for some increasing $\lambda_m>0$. 

\begin{prop} 
The model $\bestmodel{m}$ selected by sequential penalized maximum likelihood converges as $m \to \infty$.
\end{prop}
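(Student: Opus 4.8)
The plan is to mirror the supermartingale argument of Proposition 1, now carrying the penalty factor $c$ alongside the likelihood. First I would define, for each $k$, the penalized analogue
$$L_m^c(k) = \frac{c(m, d_k, \param{m})\prod_{i=1}^m p(x_i \mid \model{m}, \param{m})}{c(m-1, d_{\widehat k(m-1)}, \bestparam{m-1})\prod_{i=1}^m p(x_i \mid \bestmodel{m-1}, \bestparam{m-1})},$$
which normalizes the current penalized likelihood of model $k$ by that of the previous best model, with $x_m$ still drawn from the genuine predictive $p(\cdot \mid \bestmodel{m-1}, \bestparam{m-1})$. The target is to show that $L_m^c(k)$ is a non-negative supermartingale for each $k$, invoke the convergence of non-negative supermartingales, and then use $|K| < \infty$ to conclude that $\widehat k(m)$ converges almost surely to the maximizer of $L_\infty^c(k)$.

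The observation that makes this tractable is that $x_m$ is sampled from the \emph{unpenalized} predictive density of the selected model, so the penalty enters only as a deterministic multiplicative weight and does not alter the sampling distribution. Hence the likelihood-ratio part of $L_m^c(k)$ behaves exactly as in Proposition 1: the conditional expectation over $x_m$ integrates the $m$-th factor to one, and the definition of $\bestmodel{m-1}$ as the maximizer of the penalized likelihood supplies the remaining inequality. The only new ingredient is the penalty ratio. Here I would use that, for a fixed model, the penalty is non-increasing in the sample size: AIC has $c(m,d,\theta)=e^{-d}$ constant in $m$, so it factors out and the argument reduces verbatim to Proposition 1; BIC has $c(m,d,\theta)=m^{-d}$ and Lasso has $c(m,d,\theta)=e^{-\lambda_m|\theta|}$ with $\lambda_m$ increasing, so the weight on a fixed model only shrinks as $m$ grows.

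The main obstacle I anticipate is that this per-model monotonicity is not directly the quantity appearing in $L_m^c(k)$, since the penalty ratio compares model $k$ against the moving best model, whose dimension $d_{\widehat k(m-1)}$ and (for Lasso) parameter norm differ from those of $k$; the cross-model factor $m^{-(d_k - d_{\widehat k(m-1)})}$ for BIC need not be bounded by one. The resolution is to fold the penalty into the selection step, so that the maximality inequality is stated for the penalized likelihood rather than the raw likelihood --- that is, $\bestmodel{m-1}$ maximizes $c(m-1, d_k, \param{m-1})\prod_{i=1}^{m-1} p(x_i \mid \model{m-1}, \param{m-1})$ over $k$ --- and then to check that the penalty's time index and re-estimated parameter align correctly across consecutive steps. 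For the Lasso this alignment is the delicate point, because the MLE norm $|\param{m}|$ itself drifts with $m$, and controlling the product $\lambda_m |\param{m}|$ is precisely what the consistency conditions of \citet{zhao_model_2006} govern; for AIC and BIC the dimension is fixed per model and the accounting is routine. Once the penalized ratio is confirmed to be a non-negative supermartingale for each $k \in K$, the convergence of $L_m^c(k)$ together with the finiteness of the model space closes the argument as in Proposition 1.
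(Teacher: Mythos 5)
Your high-level strategy --- a non-negative supermartingale for each candidate $k$, then finiteness of $K$ --- is the same as the paper's, but your choice of normalizer is not, and that choice is precisely where the argument fails to close. The paper keeps the penalty in the numerator only and divides by the \emph{unpenalized} likelihood of the previous best model,
$$L_m(k, \theta_{k(m)}) = c(m, d_k, \theta_{k(m)}) \, \prod_{i=1}^m \frac{p(x_i \mid \model{m}, \theta_{k(m)})}{p(x_i \mid \bestmodel{m-1}, \bestparam{m-1})},$$
so that taking conditional expectations leaves
$$E(L_m(k, \theta_{k(m)}) \mid x_{1:m-1}) = L_{m-1}(k, \theta_{k(m)}) \, \frac{c(m, d_k, \theta_{k(m)})}{c(m-1, d_k, \theta_{k(m-1)})}.$$
The only penalty ratio that ever appears compares model $k$ \emph{with itself} at consecutive sample sizes, and it is at most one because $c$ is non-increasing in $m$ for a fixed model; the cross-model factor $m^{-(d_k - d_{\widehat{k}(m-1)})}$ that worries you simply never arises.

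Your construction, by contrast, puts the previous best model's \emph{penalized} likelihood in the denominator, and the fix you sketch (state the maximality inequality for the penalized likelihood) repairs the model mismatch but not the time mismatch, which you flag as something "to check" and leave unchecked --- and it fails. Concretely, the denominator of $L_m^c$ carries $c(m-1, d_{\widehat{k}(m-1)}, \bestparam{m-1})$ while the denominator of $L_{m-1}^c$ carries $c(m-2, d_{\widehat{k}(m-2)}, \bestparam{m-2})$; the penalized selection inequality at step $m-1$ bounds the former below by $c(m-1, d_{\widehat{k}(m-2)}, \bestparam{m-2})\prod_{i=1}^{m-1} p(x_i \mid \bestmodel{m-2}, \bestparam{m-2})$, and to reach $L_{m-1}^c$ you must then replace $c(m-1,\cdot)$ by $c(m-2,\cdot)$. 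Since $c$ \emph{decreases} in $m$, that substitution runs the wrong way: for BIC the leftover factor is $\bigl(\tfrac{m-1}{m-2}\bigr)^{d_{\widehat{k}(m-2)}} > 1$, of order $1 + d/m$ per step, which is not summable, so $L_m^c$ is not a supermartingale (nor an almost-supermartingale in the Robbins--Siegmund sense) whenever $d_k$ is smaller than the dimensions of the running best models. Your AIC case is indeed fine, exactly because there $c$ has no $m$-dependence and no time mismatch occurs. The repair is to adopt the paper's normalization, after which your Proposition 1 argument goes through; note also that the remaining delicacy you attribute to Lasso-specific consistency conditions is handled in the paper not by those conditions but by the closing requirement of compact $\Theta_k$ and regularity of $p(x \mid \mathcal{M}_k, \theta)$, needed to pass from pointwise convergence in $(k,\theta)$ to convergence of the selected model.
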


\begin{proof} 
We now consider
$$L_m(k, \theta_{k(m)}) = c(m, d_k, \theta_{k(m)}) \, \prod_{i=1}^m \frac{p(x_i \mid \model{m}, \theta_{k(m)})}{p(x_i \mid \bestmodel{m-1}, \bestparam{m-1})},$$
and
$$E(L_m(k, \theta_{k(m)}) \mid x_{1:m-1}) = L_{m-1}(k, \theta_{k(m)}) \, \frac{c(m, d_k, \theta_{k(m)})}{c(m-1, d_k, \theta_{k(m-1)})}.$$
This remains a supermartingale when $c$ decreases as $m$ increases, which is the case for the key model selection criteria listed. Hence, for each $(k, \theta_{k(m)})$, we have $L_m(k, \theta_{k(m)}) \to L_\infty(k, \theta_{k(\infty)})$ almost surely for some $L_\infty$. To extend this to uniform convergence, similar model conditions as for the convergence of an MLE are required, namely that each $\Theta_k$ is a compact space and each $p(x \mid \mathcal{M}_{k(m)}, \theta_{k(m)})$ is suitably regular.
\end{proof}

\section{Illustrations}

In this section, we demonstrate model uncertainty via predictive resampling on example problems from density estimation, variable selection, and multi-level hypothesis testing. Code for all illustrations can be found at https://github.com/vshirvaikar/MPmodel.

\subsection{Density estimation}
\label{sec:density}

\subsubsection*{Demonstration}

A typical model selection question is the number of components required in a finite Gaussian mixture model (GMM) for density estimation. To begin, we demonstrate the predictive resampling framework on a simple univariate example. We simulate $n=50$ samples from a GMM with $G=2$ components
$$f_0(y) = \half \mathcal{N}(y \mid -1, \sigma^2) + \half \mathcal{N} (y \mid 1, \sigma^2)$$
and rescale the individual data points to vary the standard deviation between $\sigma = 0.5$ and $\sigma=1$. Figure \ref{fig:density1} shows sample kernel density plots with a fixed bandwidth of 0.5 for the generated data. The two separate peaks are clearly visible when $\sigma=0.6$ (Figure \ref{fig:density1A}), for example, but begin to merge together when $\sigma=0.9$ (Figure \ref{fig:density1B}).

\begin{figure}
\centering
   \subfloat[Density with $\sigma=0.6$ for components\label{fig:density1A}]{
     \includegraphics[scale=0.85]{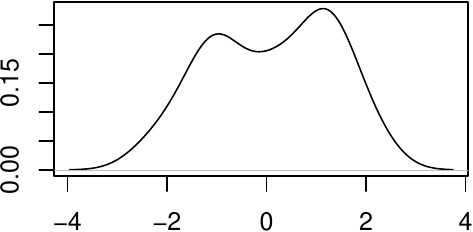}
   }
   \hfill
   \subfloat[Density with $\sigma=0.9$ for components\label{fig:density1B}]{
     \includegraphics[scale=0.85]{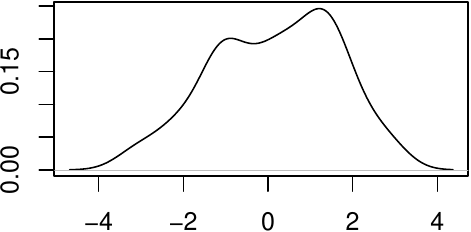}
   }
\caption{Kernel density plots for data generated from GMM with 2 components} 
\label{fig:density1}

\vspace{5cm}

\centering
\includegraphics[width=\textwidth]{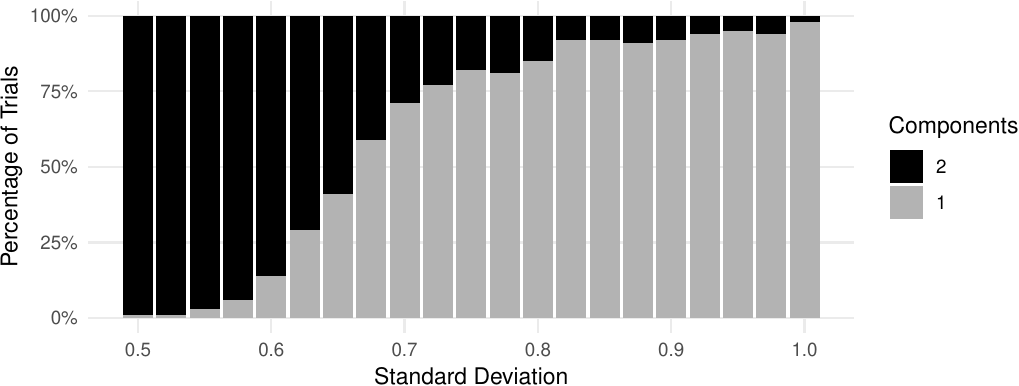}
\caption{Posterior uncertainty over number of components $G$ via resampling}
\label{fig:densmp1}
\end{figure}

Under the predictive resampling framework, the goal is to identify an optimal GMM at each step, which is then used to recursively predict one additional data point. Following \citet{fraley_model-based_2002}, we apply the expectation-maximization (EM) algorithm for clustering to our observed data of size $n$. The convergence properties of EM for Gaussian mixtures have been widely studied \citep{xu_convergence_1996}, and while its consistency is not guaranteed in all conditions, it is well-established for simple and correctly specified models such as the ones we explore here \citep{balakrishnan_statistical_2017}. 

We vary the number of components $G$ across a specified range and select the model with the lowest BIC. We assume equal variances, with a penalty of 
$$d = G \text{ means} + (G-1) \text{ proportions} + 1 \text{ common variance}$$
in the BIC calculation for a total of $2G$; if we allow unequal variances, the final term becomes $G$ as well for a total of $3G-1$. We simulate a new data point from this model, augmenting our dataset to size $n+1$, and repeat the above process. This continues for several additional points, yielding a final resampled dataset of size $N$, after which we record the final selected model for the ``complete'' data. We replicate this across several trials and then index our uncertainty over the distribution of final models.

We implement this pipeline using the \textbf{mclust} package in R \citep{scrucca_model-based_2023}, with candidate models containing either 1 or 2 components, and recursively simulate $N=n+200$ new observations per trial across a total of $B=100$ trials for each value of $\sigma$. We empirically find that this value of $N$ is sufficiently large for the model $\bestmodel{N}$ to closely approximate the final $\mathcal{M}_{k(\infty)}$; convergence diagrams can be found in the supplementary material. Figure \ref{fig:densmp1} shows the distribution of the final number of components across all trials, as $\sigma$ increases from 0.5 to 1. As expected, the posterior probability of selecting only $G=1$ component increases as the observed data becomes more unimodal.

\subsubsection*{Simulated example}

As a more complex example, we generate $n=20$ and $n=50$ data points from a GMM with $G=3$ components
$$f_0(y) = 0.4 \mathcal{N}(y \mid -3, 1) + 0.3 \mathcal{N} (y \mid 0, 1) + 0.3 \mathcal{N} (y \mid 4, 1)$$
where the goal is to identify and return uncertainty around the true value of $G$. Figure \ref{fig:density2} shows kernel density plots for the generated data, where the three peaks are less clear in the $n=20$ case (Figure \ref{fig:density2A}) than the $n=50$ case (Figure \ref{fig:density2B}).

\begin{figure}
\centering
   \subfloat[Density for $n=20$ observations\label{fig:density2A}]{
     \includegraphics[scale=0.85]{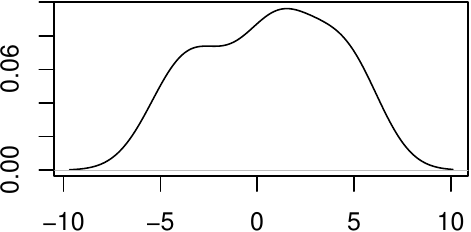}
   }
   \hfill
   \subfloat[Density for $n=50$ observations\label{fig:density2B}]{
     \includegraphics[scale=0.85]{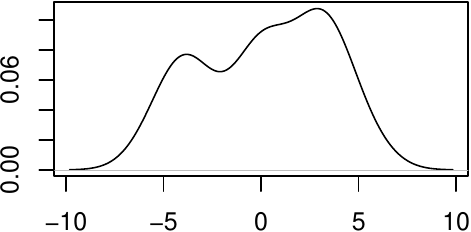}
   }
\caption{Kernel density plots for data generated from GMM with 3 components} 
\label{fig:density2}

\vspace{1cm}

\centering
   \subfloat[Components for $n=20$ observations\label{fig:dpmm2A}]{
     \includegraphics[scale=0.8]{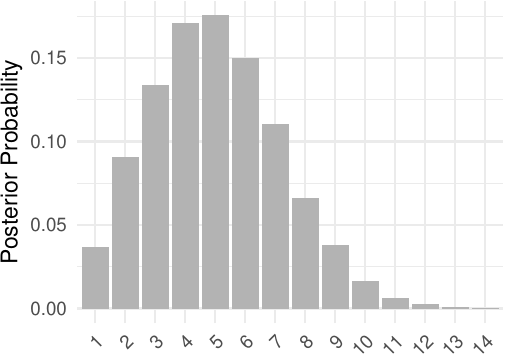}
   }
   \hfill
   \subfloat[Components for $n=50$ observations\label{fig:dpmm2B}]{
     \includegraphics[scale=0.8]{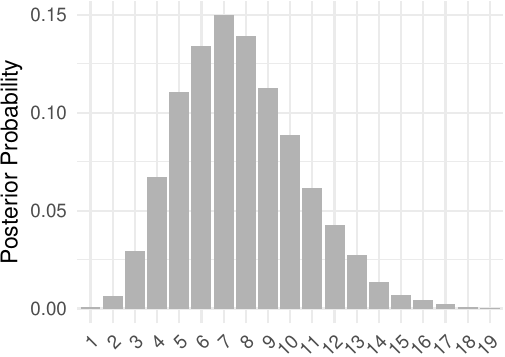}
   }
\caption{Posterior uncertainty over number of components $G$ sampled in DPMM} 
\label{fig:dpmm2}

\vspace{1cm}

\centering
   \subfloat[Components for $n=20$ observations\label{fig:densmp2A}]{
     \includegraphics[scale=0.8]{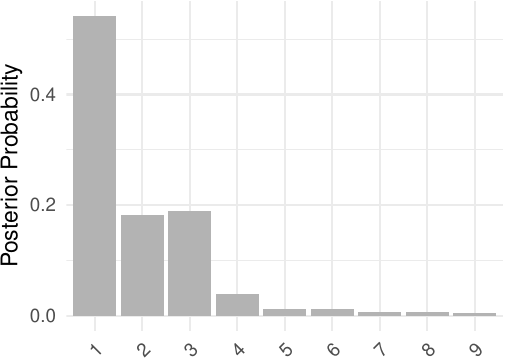}
   }
   \hfill
   \subfloat[Components for $n=50$ observations\label{fig:densmp2B}]{
     \includegraphics[scale=0.8]{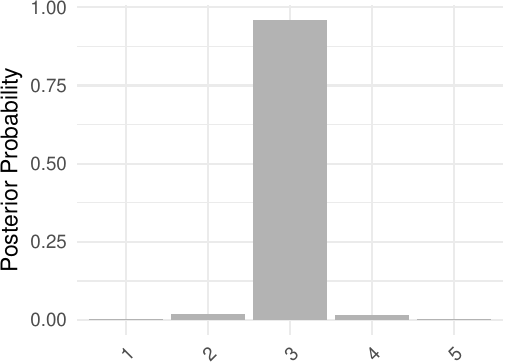}
   }
\caption{Posterior uncertainty over number of components $G$ via resampling} 
\label{fig:densmp2}
\end{figure}

Following \citet{fong_martingale_2024}, we apply Dirichlet process mixture modeling (DPMM) as our baseline for comparison. The DPMM \citep{escobar_bayesian_1995} samples from a Dirichlet process distribution $DP(H, \alpha)$ where each mixture component has its underlying parameters drawn from the base distribution $H$, and the probability of forming new components is determined by the concentration parameter $\alpha$. In practice, since directly evaluating this joint distribution is not possible, DPMM uses Gibbs sampling to return uncertainty over the number of components, their means, variances, and weights. 

We implement DPMM with the \textbf{dirichletprocess} package in R using the default priors and hyperparameters \citep{ross_dirichletprocess_2019}. For eight separate Gibbs sampling chains, we discard the first 500 iterations as burn-in and retain the next 2,000 iterations. Figure \ref{fig:dpmm2} displays the distribution of the number of components sampled in the DPMM. For the $n=20$ case (Figure \ref{fig:dpmm2A}), the mode is 5 components, and for $n=50$ (Figure \ref{fig:dpmm2B}) the mode is 7 components, both more complex than the ``true'' $G=3$. This reflects the known result that DPMM should not be used to estimate the number of components, which asymptotically tends towards infinity as $n$ increases \citep{yang_posterior_2019, cai_finite_2020}.

For the resampling approach, we implement EM clustering with the \textbf{mclust} package in R with specified candidate models ranging from 1 to 9 components. Models with both equal and unequal variances are tested, with differing dimension penalties in the BIC calculation as noted previously. We recursively simulate $N=n+600$ new observations per trial across a total of $B=400$ trials; this value of $N$ is again empirically found to be sufficiently large that $\mathcal{M}_{k(N)}$ closely approximates $\mathcal{M}_{k(\infty)}$, with convergence diagrams available in the supplementary material. 

Figure \ref{fig:densmp2} shows the distribution of the final number of components across all trials. In the $n=20$ case, the initial model with the best BIC has 1 component, and this transitions to $G=3$ in 19\% of trials, with uncertainty spread over values up to $G=9$. In the $n=50$ case, the initial model with the best BIC has 3 components, and this remains the same for 96\% of trials, with more concentrated uncertainty ranging over values from $G=1$ to 5. In both cases, the BIC penalty against complex models prevents the inflated number of components observed in the DPMM, and we retrieve reasonable posterior uncertainty over the total number of components.

\subsubsection*{Real-world example}

To demonstrate on a real-world example, we consider the galaxies dataset from \citet{roeder_density_1990}, which contains velocity measurements for $n=82$ galaxies in the Corona Borealis region. Figure \ref{fig:density3} shows a kernel density plot of the data, where the goal is to group similar galaxies by velocity. This dataset has been used for univariate clustering analysis across several prior works \citep{richardson_bayesian_1997, rodriguez_martingale_2025}.

\begin{figure}
\centering
\includegraphics[scale=0.85]{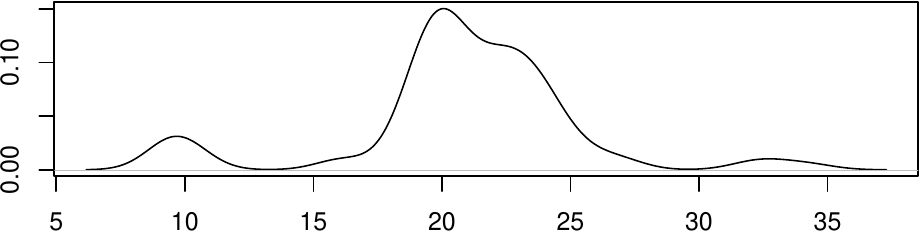}
\caption{Kernel density plot for galaxies dataset}
\label{fig:density3}

\vspace{5cm}

\centering
\includegraphics[width=\textwidth]{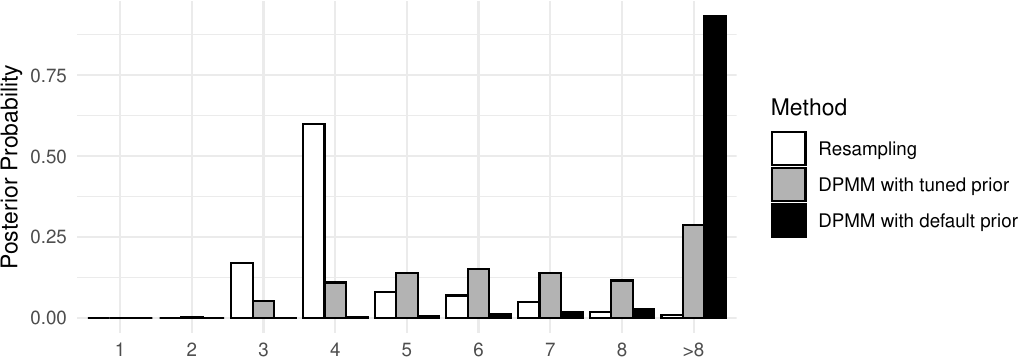}
\caption{Posterior uncertainty over number of components $G$ via resampling (white); DPMM with tuned prior $\gamma(1,8)$ for concentration parameter $\alpha$ (gray); and DPMM with default prior $\gamma(2,4)$ for concentration parameter (black)}
\label{fig:densmp3}
\end{figure}

As before, we use the \textbf{dirichletprocess} package in R, with eight separate Gibbs sampling chains containing 500 burn-in and 2,000 retained iterations. We set the prior cluster mean $\mu_0$ equal to the observed mean of 20.83 in the data, but otherwise keep the default priors and hyperparameters. In Figure \ref{fig:densmp3}, we see that this DPMM (shaded in black) again overestimates the number of components, with a mode of 15. This can be alleviated by tuning the priors -- for example, if we decrease the prior on the concentration parameter $\alpha$ from the default of $\gamma(2,4)$ to a more conservative $\gamma(1,8)$, the new mode of the DPMM (shaded in gray) becomes 6 components. However, this example reveals a key limitation of the DPMM and other sampling-based approaches, which is the underlying dependence on prior specification.

In contrast, the resampling approach returns posterior uncertainty without the need for a prior. We again use the \textbf{mclust} package in R, with candidate models ranging from 1 to 9 components using both equal and unequal variances. We recursively simulate $N=n+1500$ new observations per trial across a total of $B=100$ trials; this sample size is sufficient for the model choice to converge, with a diagram available in the supplementary material. Figure \ref{fig:densmp3} shows the distribution of the final number of components across all trials (shaded in white), with a realistic mode of 4 components.

This illustration showcases how predictive resampling can convert any general model selection technique into a probabilistic quantification of model uncertainty. We demonstrate here using density estimation with \textbf{mclust}, but the above framework would apply for any other package or method that allows the user to compare models using a consistent criterion and then sample a new observation from the best model.

\subsection{Variable selection}
\label{sec:varsel}

Another common model uncertainty question is identifying relevant variables in a high-dimensional regression. To demonstrate, we generate $n=\{10, 20, 50, 100\}$  data points across $B=100$ separate trials from a simple linear model with 20 i.i.d. normal covariates $x_1, x_2, \ldots, x_{20} \sim \mathcal{N}(0, 1)$, where only the first five are used to determine the outcome
\begin{equation}
y = x_1 + x_2 + x_3 + x_4 + x_5 + \mathcal{N}(0, 1)
\end{equation}
We expect that uncertainty on the selected variables will decrease as the observed sample size $n$ increases. In other words, we expect model choice to be diffuse across the covariates when $n=10$, and more certain when $n=100$.

As briefly discussed in Section \ref{sec:mcmethod}, within the predictive framework, we take the observed covariates $X$ as fixed, and the target at each step is to identify an optimal regression model for the prediction of new outcomes $Y$. Specifically, to maintain c.i.d. resampling, we replicate the entire design matrix $x_{1:n}$ to yield $x_{n+1:2n}$; predict $y_{n+1:2n}$ using the optimal $\bestmodel{n}$; and then append $(x,y)_{n+1:2n}$ to the observed data. 

To break this process down in more detail, if we were fitting a standard linear model without selection given $(Y_0, \ldots, Y_k)$, we would get
$$\widehat\beta_{k+1}=(X'X)^{-1}X'\left[\frac{1}{k+1}\sum_{l=0}^{k+1} Y_l\right]$$
so we take
$$Y_{k+1}=X\widehat\beta_{k+1}+\sigma\varepsilon=H\,\left[\frac{1}{k+1}\sum_{l=0}^{k+1} Y_l\right]+\sigma\,\varepsilon_{k+1}$$
and the procedure would iterate to yield a random $\widehat\beta_\infty$ from the posterior. To incorporate model selection, we compute the BIC value for each model at each iteration. We calculate
$$n(k+1)\log \left\{\frac{{\cal Y}'_k(I-H_{d,k+1}){\cal Y}_k}{n(k+1)}\right\}+d\log (n(k+1)),$$
where $d$ is the dimension of the model and $H_{k,d}$ is the $n(k+1) \times n(k+1)$ hat matrix, based on design matrix $X_d$ and comprising of blocks of $H_d$. So, for example,
$$H_{d,2}=\half\left(
\begin{array}{ll}
H_d & H_d \\
H_d & H_d
\end{array}
\right).$$
In the BIC calculation, we then have 
$${\cal Y}'_{k}(I-H_{d,k}){\cal Y}_k=\sum_{l=0}^k Y_l'Y_l-(k+1)^{-1}\sum_{0\leq l,j\leq k} Y_l'H_dY_j$$
The $1/(k+1)$ term appears throughout since
$$(X_k'X_k)^{-1}=(k+1)^{-1}(X'X)^{-1},$$
where $X_k$ is the $n(k+1) \times d$ design matrix for the data $Y_k$. At each iteration, we pick the best model $d(k)$ and then sample $Y_{k+1}$ using $H_{d(k)}$. 

We apply Gibbs sampling \citep{george_variable_1993} as our baseline for comparison, with binary inclusion indicators $\gamma_j \in \{0, 1\}$ for each covariate. The resulting Markov chain can iteratively explore different subsets of predictors by adding and removing covariates from the model specification. This approach is modeled after reversible jump Markov chain Monte Carlo (RJ-MCMC), a popular Bayesian model selection technique for exploration across models with different numbers of parameters \citep{green_reversible_1995}. However, it avoids the computational complexity of RJ-MCMC's modified Metropolis-Hastings acceptance probability, which adjusts for the change in ``volume'' of the parameter space when moving between models of different sizes.

We implement our Gibbs sampler with the \textbf{rjags} package in R \citep{plummer_rjags_2023}. For regression, we model the linear predictor term as 
\begin{equation}
\mu_i = \beta_0 + \sum_{j=1}^J (\beta_j \cdot \gamma_j \cdot X_{i, j})    
\end{equation}
where the outcomes are then $y_i \sim \mathcal{N}(\mu_i, 1/\tau)$. The prior distributions are specified as $\beta_0 = \beta_j \sim \mathcal{N}(0, 0.01)$ for the intercept term and regression coefficients; $\gamma_j \sim \text{Bernoulli}(0.5)$ for the variable inclusion indicators; and $\tau \sim \text{Gamma}(0.01, 0.01)$ for the response precision. In the context of model selection, our target parameters of interest are the variable inclusion indicators $\gamma_j$ that indicate whether each covariate is excluded or included. 

In each trial, for three separate Gibbs sampling chains, we discard the first 5,000 iterations as burn-in and retain the next 10,000 iterations. Figure \ref{fig:varRJ1} displays the mean posterior selection frequency for each covariate across the $B=100$ trials, while Figure \ref{fig:varRJ2} displays the proportion of trials for which the inclusion indicator $\gamma_j$ is greater than 0.5, indicating the variable is more likely than not to be included. In both cases, we observe a similar pattern, with significant uncertainty across all variables for $n=10$; roughly 50-50 identification of the correct variables for $n=20$; and near-certainty for $n=50$ and $n=100$.

\begin{figure}
\centering
\includegraphics[width=\textwidth]{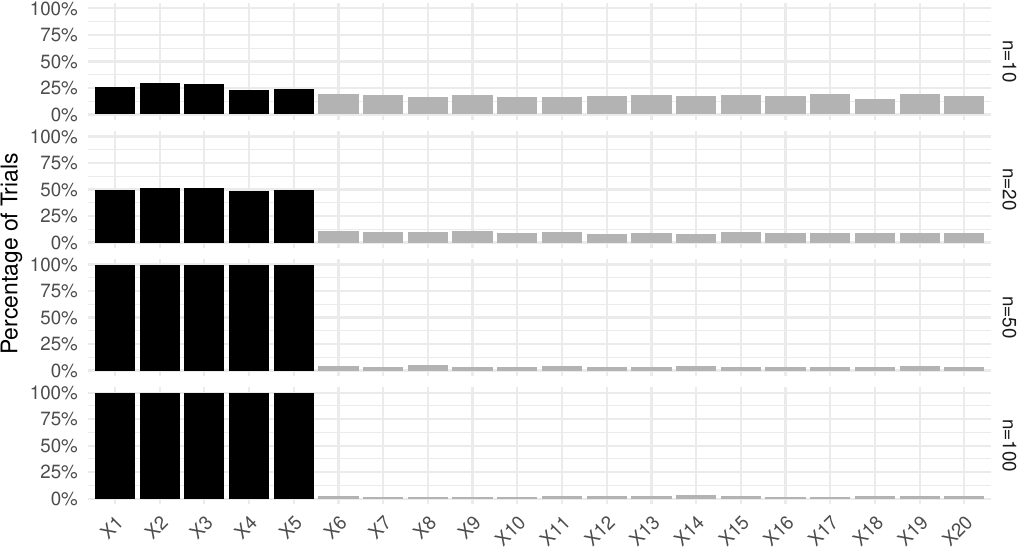}
\caption{Mean posterior selection frequencies $\sum \gamma_j / B$ for Gibbs sampling as observed sample size increases, with correct variables in black}
\label{fig:varRJ1}
\vspace{0.5cm}
\includegraphics[width=\textwidth]{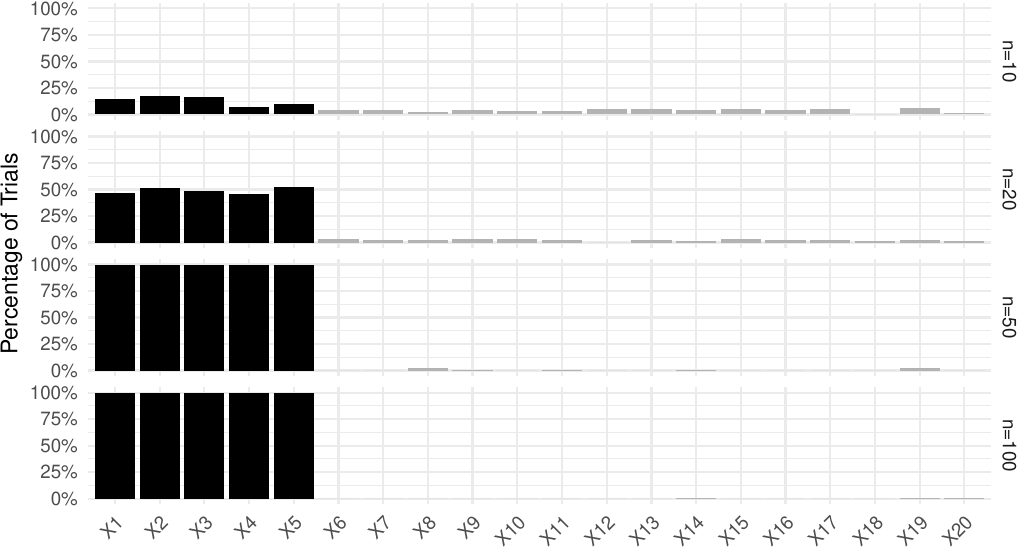}
\caption{Proportion of trials with posterior selection frequency $\gamma_j > 0.5$ for Gibbs sampling as observed sample size increases, with correct variables in black}
\label{fig:varRJ2}
\end{figure}

\begin{figure}
\centering
\includegraphics[width=\textwidth]{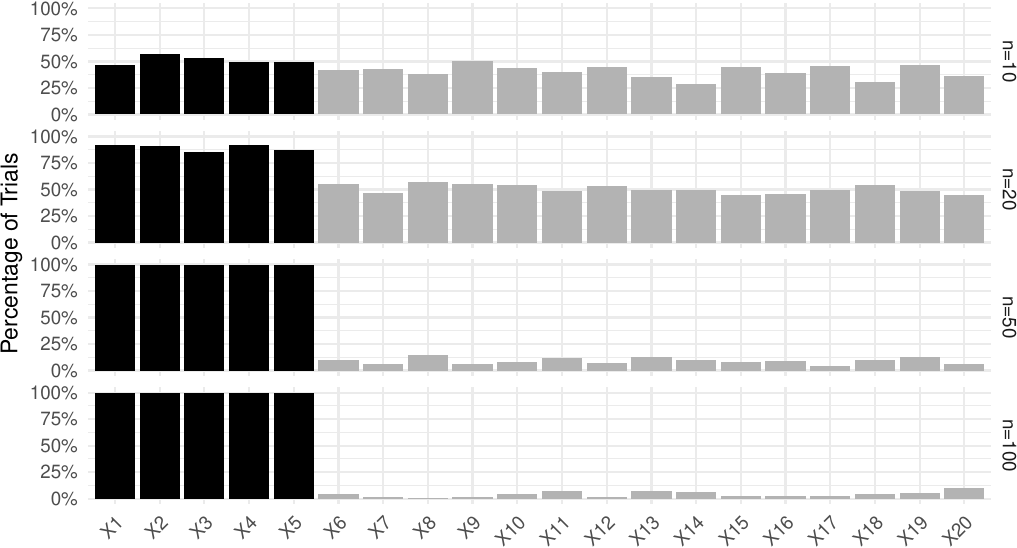}
\caption{Proportion of trials with $x_j$ in final model for forward stepwise regression with BIC as observed sample size increases, with correct variables in black}
\label{fig:varBIC}
\vspace{0.5cm}
\includegraphics[width=\textwidth]{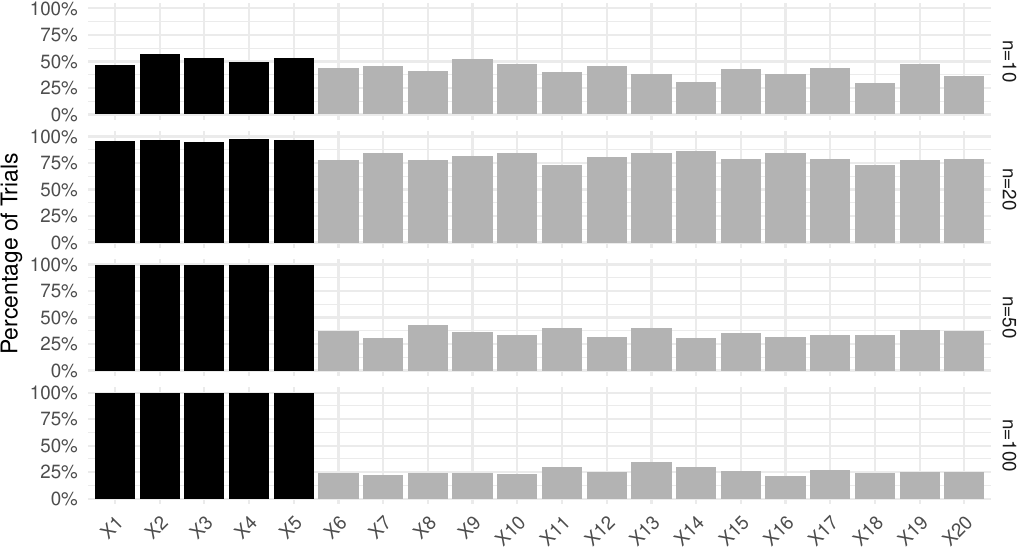}
\caption{Proportion of trials with $x_j$ in final model for forward stepwise regression with AIC as observed sample size increases, with correct variables in black}
\label{fig:varAIC}
\end{figure}

For the resampling approach, to simplify the identification of the optimal model at each step, we apply forward stepwise regression with the \textbf{MASS} package in R \citep{venables_modern_2002}. We begin with the null model (intercept only) and greedily add terms one at a time that most improve the model BIC until no further improvements can be found \citep{efroymson_multiple_1960}, then use this intermediate model to predict a new vector of outcomes $Y$. This is repeated for $b=10$ blocks of size $n$, yielding a final resampled dataset of size $n\,(b+1)$, after which we record the final selected model. We then replicate this procedure across the $B=100$ trials to return uncertainty over the distribution of final models.

Figure \ref{fig:varBIC} displays the posterior selection frequency of each variable as a function of the observed sample size, using BIC as the optimization criterion. We observe similar patterns to the Gibbs sampling approach. In the $n=10$ case, the procedure is highly uncertain and distributes weight across all available covariates, while in the $n=20$ case, the true variables clearly receive more weight. In the $n=50$ and $n=100$ cases, variables $x_1$ to $x_5$ are correctly identified in almost all trials. 

For comparison, Figure \ref{fig:varAIC} displays the posterior selection frequencies with AIC as the optimization criterion. As discussed previously, AIC is asymptotically equivalent to LOO-CV, meaning it is also asymptotically inconsistent since it tends to select overly complex models \citep{shao_linear_1993}. This pattern is reflected in the observed results, with the AIC including more variables on average across all sample sizes.

\subsection{Multi-level hypothesis testing}

For a more detailed demonstration of hypothesis testing, we consider the well-known example of the ``hot hand'' in basketball -- i.e. whether players are subject to swings in ``momentum'' that affect their game-by-game accuracy \citep{gilovich_hot_1985, kass_bayes_1995, hsiao_bayesian_2005}. 

For a particular player, we observe data $(n_i, k_i)$ for $i = 1, 2, \ldots, g$ games, where $n$ is the number of shots attempted and $k$ is the number scored. If the player's true underlying shooting percentage is stable, then the variation across individual games should be explained by binomial distributions with the same success probability. If not, the game-level shooting percentages should be sufficiently independent from each other. A simple hypothesis test that captures this difference is
\begin{align}
    H_0&: k_i \sim \text{Bin}(n_i, p) \text{ with a common percentage } p \\
    H_1&: k_i \sim \text{Bin}(n_i, p_i) \text{ with independent } p_i \sim B(a, b)
\end{align}
where the null model is binomial for all games and the alternative model is beta-binomial.

The null model likelihood is
\begin{equation}
\mathcal{L}_0(1, 1 \mid \{n_i, k_i\}_{i=1}^g) = \prod_{i=1}^g \binom{n_i}{k_i} \frac{B(k_i + 1, n_i - k_i + 1)}{B(1, 1)}
\end{equation}
with a uniform prior on the beta distribution.

For the alternative model likelihood, we have
\begin{equation}
\mathcal{L}_1(a, b \mid \{n_i, k_i\}_{i=1}^g) = \prod_{i=1}^g \binom{n_i}{k_i} \frac{B(k_i + a, n_i - k_i + b)}{B(a, b)}
\end{equation}
where the best values of $\hat{a}$ and $\hat{b}$ can be fitted with numerical optimization.

We maximize the log-likelihood for each model and compare the BIC. The number of parameters is one for the null model (the prior) and two for the alternative model ($a, b$), while the sample size is the number of games. For the model with the better BIC, we simulate a new game by bootstrapping $n_{g+1}$ from $(n_1, \ldots, n_g)$ and then drawing $k_{g+1} \sim \text{BetaBin}(n_{g+1}, 1, 1)$ in the null case, or $k_{g+1} \sim \text{BetaBin}(n_{g+1}, \hat{a}, \hat{b})$ in the alternative case.

To demonstrate, we generate data for $g=20$ games with $k_i \sim B(n_i, p_i)$, where $n_i \sim \text{Unif}[5, 15]$ and $p_i \sim B(\alpha, \alpha)$. We vary the value of $\alpha \in \{0.5, 1, 1.5, 2, 2.5\}$ and resample up to $G=200$ games across 100 trials. The figure below shows the acceptance proportion of $H_0$ with respect to the $\alpha$ used for data generation. As expected, we accept $H_0$ in most cases when the initial data are generated under $\text{BetaBin}(n_i, 1, 1)$, corresponding to the original null hypothesis, and vice versa.

\vspace{.2cm}
\begin{figure}[h]
\centering
\includegraphics[width=.7\textwidth]{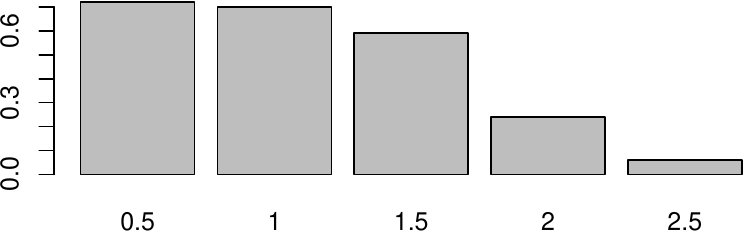}
\caption{$H_0$ is accepted in 70\% of trials when the data is generated under $B(1, 1)$, decreasing to 6\% of trials under $B(2.5, 2.5)$.}
\end{figure}


\section{Discussion}

We view model uncertainty through the lens of missing information. With a consistent model selection criterion in hand, we would be able to reliably identify the correct model if we had the complete data. The imputation of new observations allows us to convert such a model selection criterion directly into probabilities over the space of candidate models, by propagating uncertainty through different possible realizations of the missing data. Predictive resampling \citep{fong_martingale_2024} serves as our mechanism for imputation, and we see that model choice under this framework converges as the sample size grows. Our approach serves as a form of model expansion around the initial best model for the observed data, as discussed by \citet{draper_assessment_1995}, and also echoes the prequential argument of \citet{dawid_present_1984} with its focus on step-by-step prediction as the fundamental object of statistical modeling. This framework can be applied to any general method or package, as long as it allows for models to be compared and for a new observation to then be sampled from the best model.

Computationally, we avoid certain complexities entailed by Bayes factor calculations and MCMC-style sampling methods, with the additional benefit of bypassing the need for subjective prior specification. We acknowledge challenges related to Monte Carlo uncertainty quantification. Specifically, our method relies on indexing uncertainty over several resampling repetitions to obtain precise results, which becomes increasingly difficult as the model space grows. The ability to parallelize the resampling process is a valuable asset that can alleviate some of these computational demands. Another question that warrants attention is the requirement for models to be easily updated and for the consistent model selection criterion to be rapidly optimized at each step. While we have not specifically explored this aspect in the current work, it presents a potential area for future research. Efficient iteration of the model search process -- for example, through an online search process that maintains a working model and uses a gradient-based update at each step -- will be critical for scaling our method to more complex or higher-dimensional model spaces.

\subsubsection*{Acknowledgements}

Shirvaikar is supported by the EPSRC Centre for Doctoral Training in Modern Statistics and Statistical Machine Learning (EP/S023151/1) and Novo Nordisk. Holmes is supported by Novo Nordisk. 

\bibliography{export-data}

\newpage

\section{Supplementary material}

\subsection{Convergence diagrams for two-sided hypothesis testing}

The first set of diagrams is for the demonstration in Section \ref{sec:hyptest}, where we test $H_0: \theta = \theta_0$ against $H_1: \theta \neq \theta_0$ for a normal mean with $\sigma^2 = 1$. For $n = \{30, 100, 300, 1000\}$ observations simulated from the ``true alternative'' $\mathcal{N}(0.1,1)$, we see in Figures \ref{fig:jellyH1_30}, \ref{fig:jellyH1_100}, \ref{fig:jellyH1_300}, and \ref{fig:jellyH1_1000} respectively that $N = n + 20n$ additional observations are sufficient for $\bestmodel{N}$ to closely approximate the final $\mathcal{M}_{k(\infty)}$ over 100 random trials.

We note that the scale of the y-axis (the range of final model means) decreases as the initial observed $n$ increases, reflecting the intuition that uncertainty should be reduced with the observation of additional data.

\subsection{Comparison plots for two-sided hypothesis testing}

In Figure \ref{fig:h0demoE}, we plot the (log-scaled) $e$-values for each seed on the horizontal axis and the resampling posterior probabilities of $H_0$ on the vertical axis, for data generated under $H_0$. This corresponds to Figure \ref{fig:h0demo} for $p$-values. The X marks on the plots indicate tests with $e>10$ where Jeffreys' rule of thumb finds strong evidence against $H_0$, and the O marks indicate $e<10$. We note that the $e$-values are invariant to sample size, with a constant Type I error rate regardless of $n$, but that the resampling probabilities tend towards 1 as the sample size grows.

Figure \ref{fig:h1demoE} is the corresponding diagram to Figure \ref{fig:h1demo} for $p$-values, with data generated under $H_1$. The $e$-values build evidence against the null as $n$ increases, with the points gradually migrating towards the bottom and right as we observe more data.

\subsection{Convergence diagrams for density estimation}

The next set of diagrams is for the illustrations in Section \ref{sec:density}. For the first example with two components, we track the ongoing model choice between $G=1$ and $G=2$ components as additional data points are imputed across 100 random trials. We add some jitter to each individual resampling line to improve the visibility of the diagram. In Figures \ref{fig:jellyEM1_06} and \ref{fig:jellyEM1_09}, for data with $\sigma=0.6$ and $\sigma=0.9$ respectively, $N=n+200$ additional observations are sufficient for the choice of model to converge.

For the second example with three components, we now plot candidate models with $G = \{1, \ldots, 9\}$ components across 100 random trials. We again add some jitter to each individual resampling line for visibility. In Figure \ref{fig:jellyEM2_20}, we see that models with up to 9 components are explored, and that $N=n+600$ additional observations are sufficient for $\bestmodel{N}$ to approximate $\mathcal{M}_{k(\infty)}$. In Figure \ref{fig:jellyEM2_50}, we see that model space is only explored up to $G=4$ components, reflecting the principle that observing more initial data should tighten our final uncertainty estimate.

Finally, for the real-world galaxies dataset, we again plot candidate models with $G = \{1, \ldots, 9\}$ components across 100 random trials, with some jitter for visibility. Figure \ref{fig:jellyEM3} shows that $N=n+1500$ additional observations are sufficient for the choice of model to converge.

\newpage 

\begin{figure}
\centering
\includegraphics[width=\textwidth]{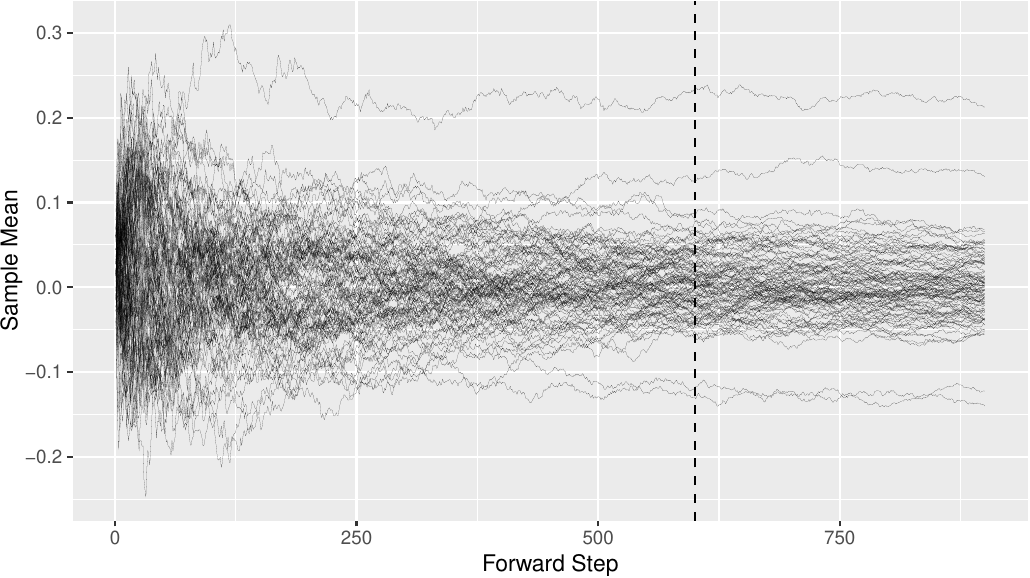}
\vspace{-0.5cm}
\caption{Example convergence diagram for predictive resampling with $m=30$ observations simulated from $\mathcal{N}(0.1,1)$, showing that model choice converges after $M=m+20m$ imputed observations, indicated by the dashed vertical line.}
\label{fig:jellyH1_30}
\includegraphics[width=\textwidth]{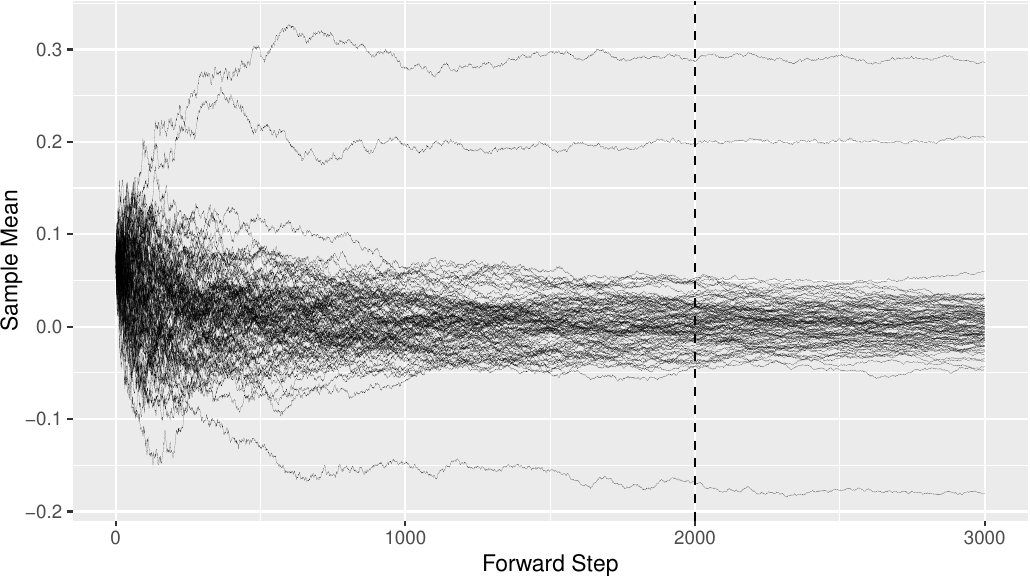}
\vspace{-0.5cm}
\caption{Example convergence diagram for predictive resampling with $m=100$ observations simulated from $\mathcal{N}(0.1,1)$, showing that model choice converges after $M=m+20m$ imputed observations, indicated by the dashed vertical line.}
\label{fig:jellyH1_100}
\end{figure}

\begin{figure}
\centering
\includegraphics[width=\textwidth]{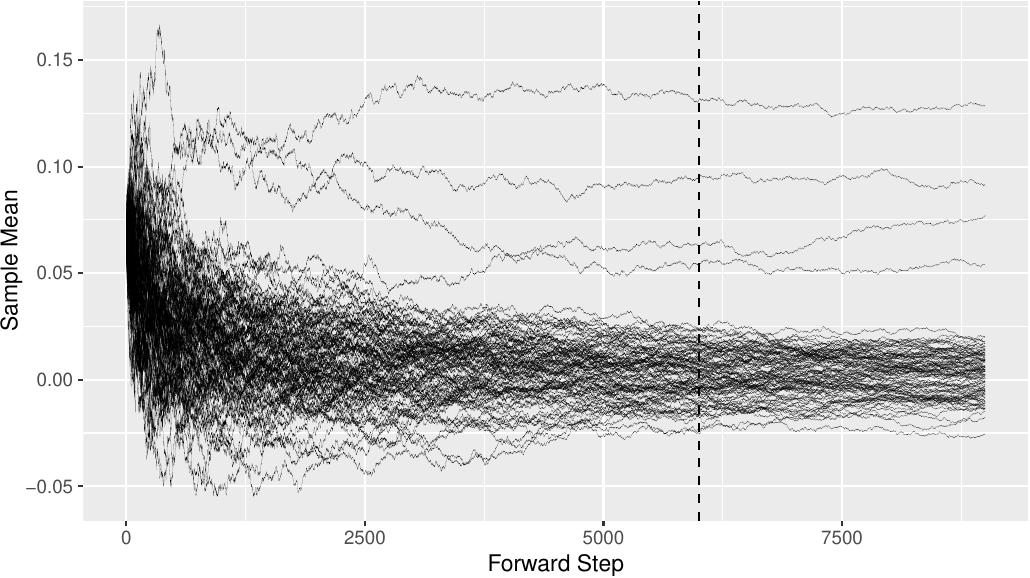}
\vspace{-0.5cm}
\caption{Example convergence diagram for predictive resampling with $m=300$ observations simulated from $\mathcal{N}(0.1,1)$, showing that model choice converges after $M=m+20m$ imputed observations, indicated by the dashed vertical line.}
\label{fig:jellyH1_300}
\includegraphics[width=\textwidth]{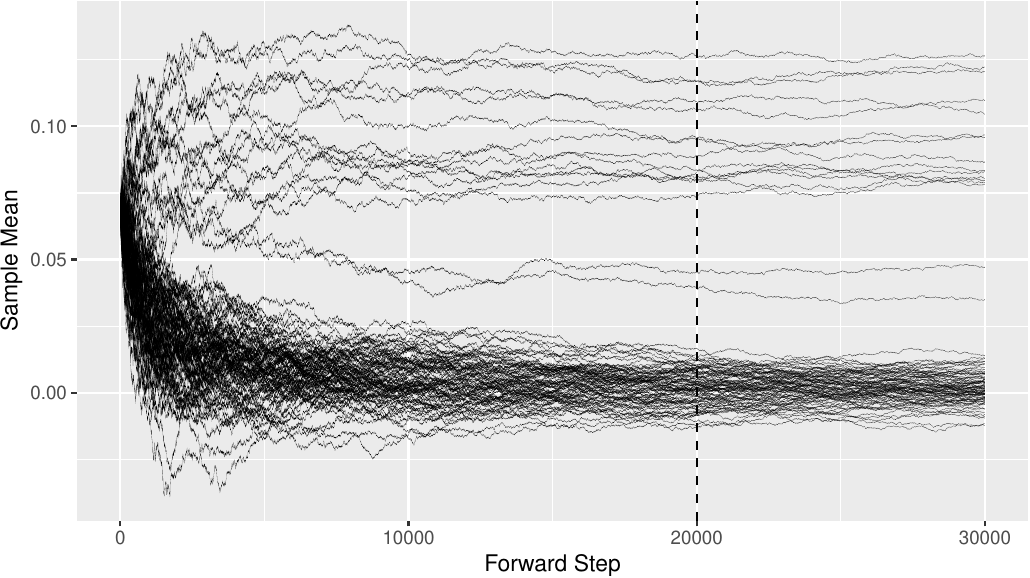}
\vspace{-0.5cm}
\caption{Example convergence diagram for predictive resampling with $m=1000$ observations simulated from $\mathcal{N}(0.1,1)$, showing that model choice converges after $M=m+20m$ imputed observations, indicated by the dashed vertical line.}
\label{fig:jellyH1_1000}
\end{figure}

\begin{figure}
\centering
\includegraphics[width=\textwidth]{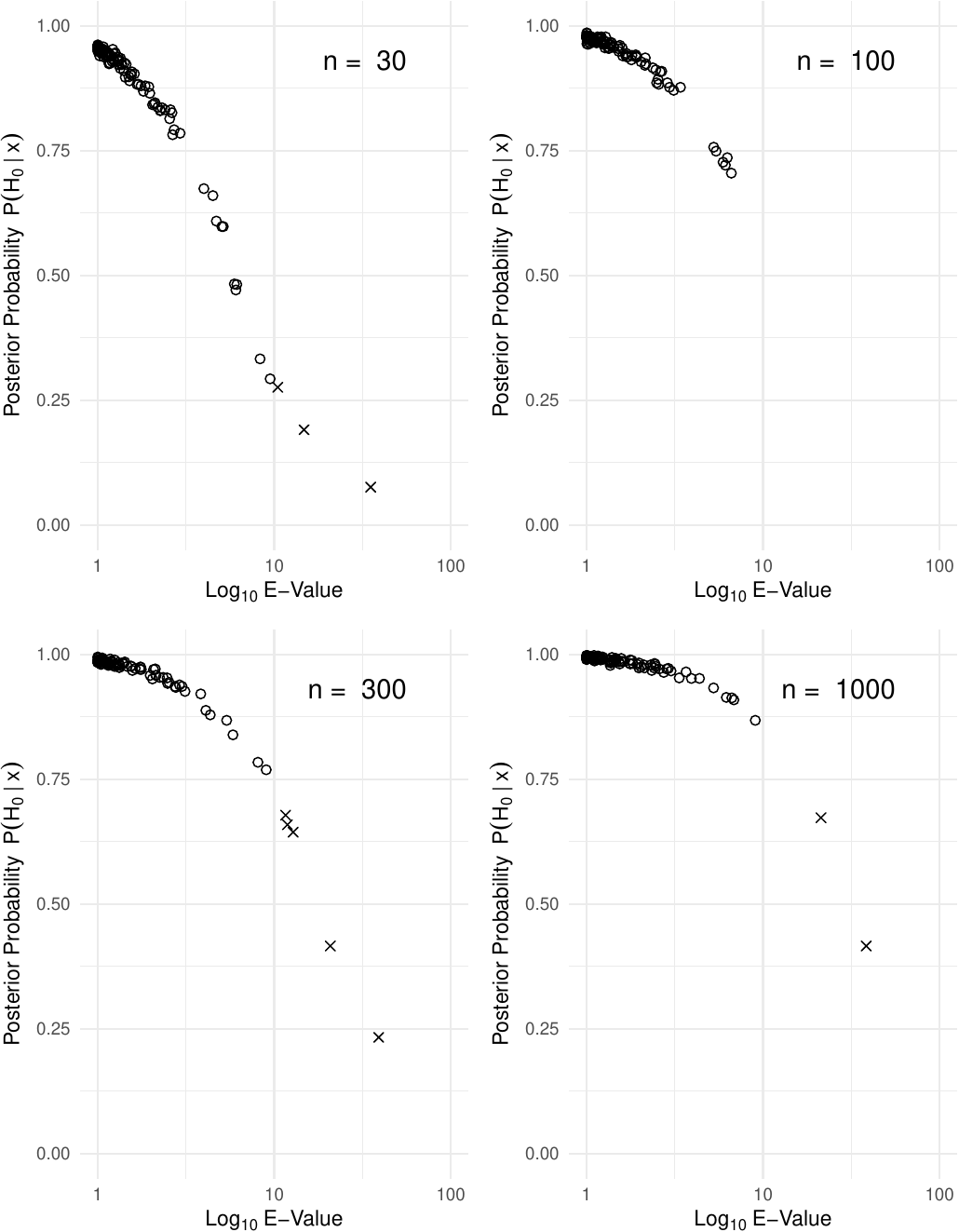}
\vspace{-0.5cm}
\caption{Observed $e$-value (log-scale) vs. resampling posterior probability of $H_0$ for data generated under the null $\mathcal{N}(0, 1)$ across 100 random seeds. X denotes tests with $e>10$ where Jeffreys' rule of thumb finds strong evidence against $H_0$, and O denotes tests with $e<10$.}
\label{fig:h0demoE}
\end{figure}

\begin{figure}
\centering
\includegraphics[width=\textwidth]{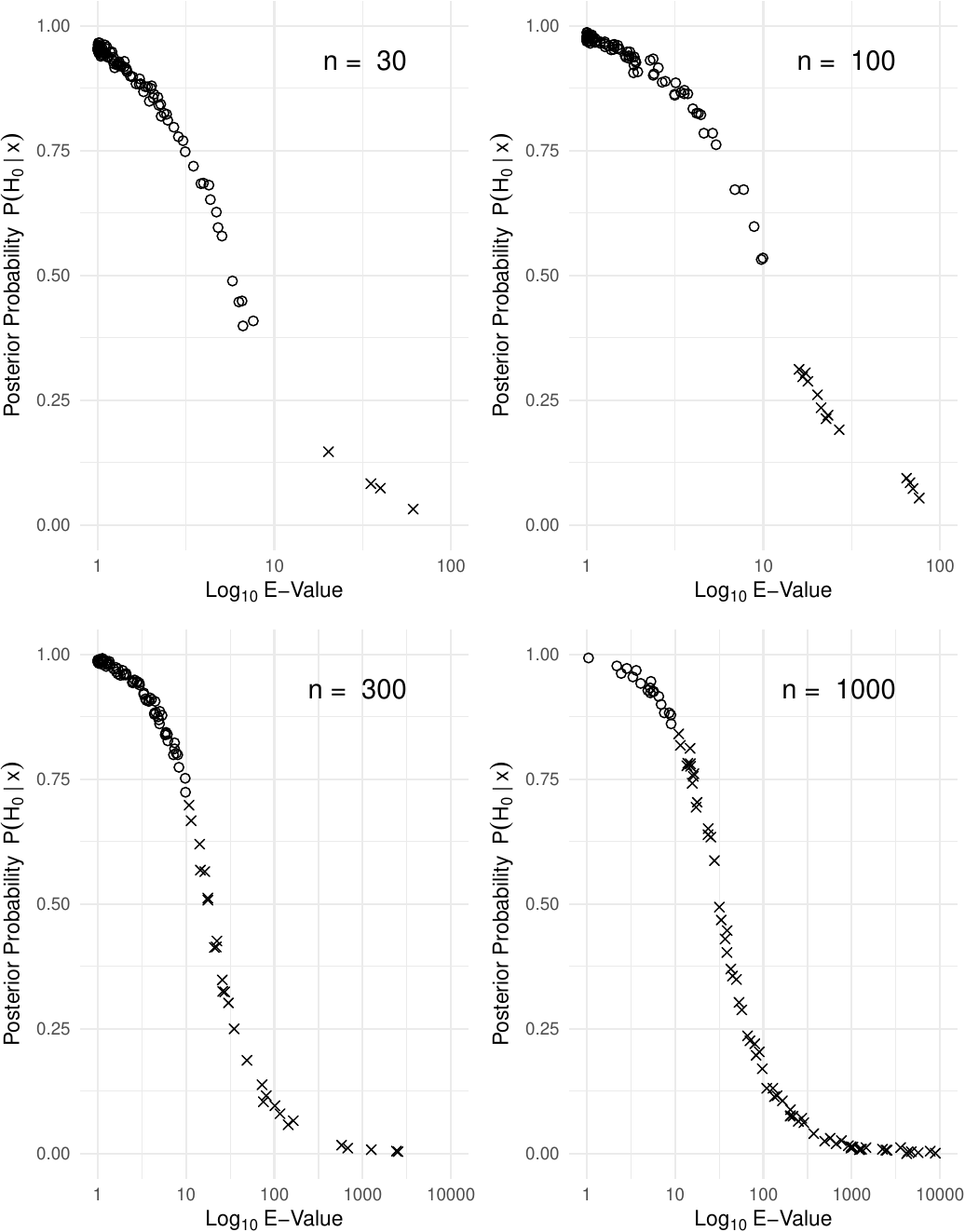}
\vspace{-0.5cm}
\caption{Observed $e$-value (log-scale) vs. resampling posterior probability of $H_0$ for data generated under the alternative $\mathcal{N}(0.1, 1)$ across 100 random seeds. X denotes tests with $e>10$ where Jeffreys' rule of thumb finds strong evidence against $H_0$, and O denotes tests with $e<10$.}
\label{fig:h1demoE}
\end{figure}

\begin{figure}
\centering
\includegraphics[width=\textwidth]{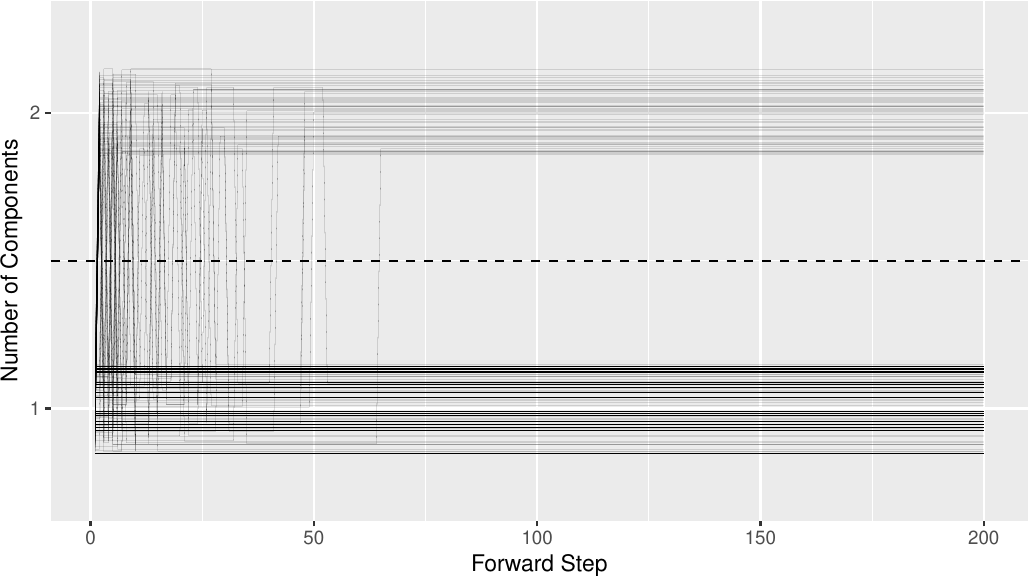}
\vspace{-0.5cm}
\caption{Example convergence diagram for density estimation in two-component GMM with $\sigma=0.6$, showing that model choice converges after $N=n+200$ imputed observations.}
\label{fig:jellyEM1_06}
\includegraphics[width=\textwidth]{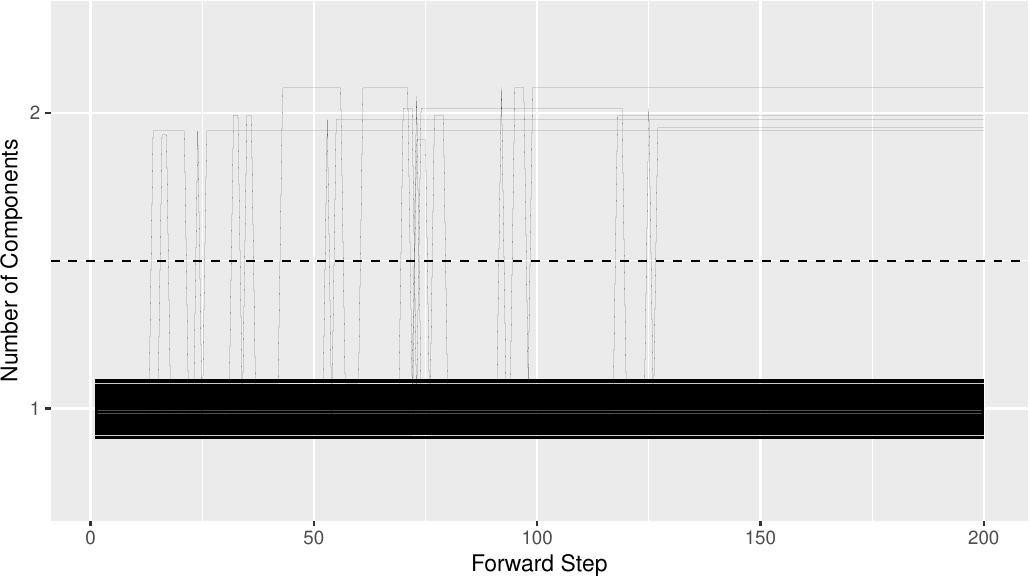}
\vspace{-0.5cm}
\caption{Example convergence diagram for density estimation in two-component GMM with $\sigma=0.9$, showing that model choice converges after $N=n+200$ imputed observations.}
\label{fig:jellyEM1_09}
\end{figure}

\begin{figure}
\centering
\includegraphics[width=\textwidth]{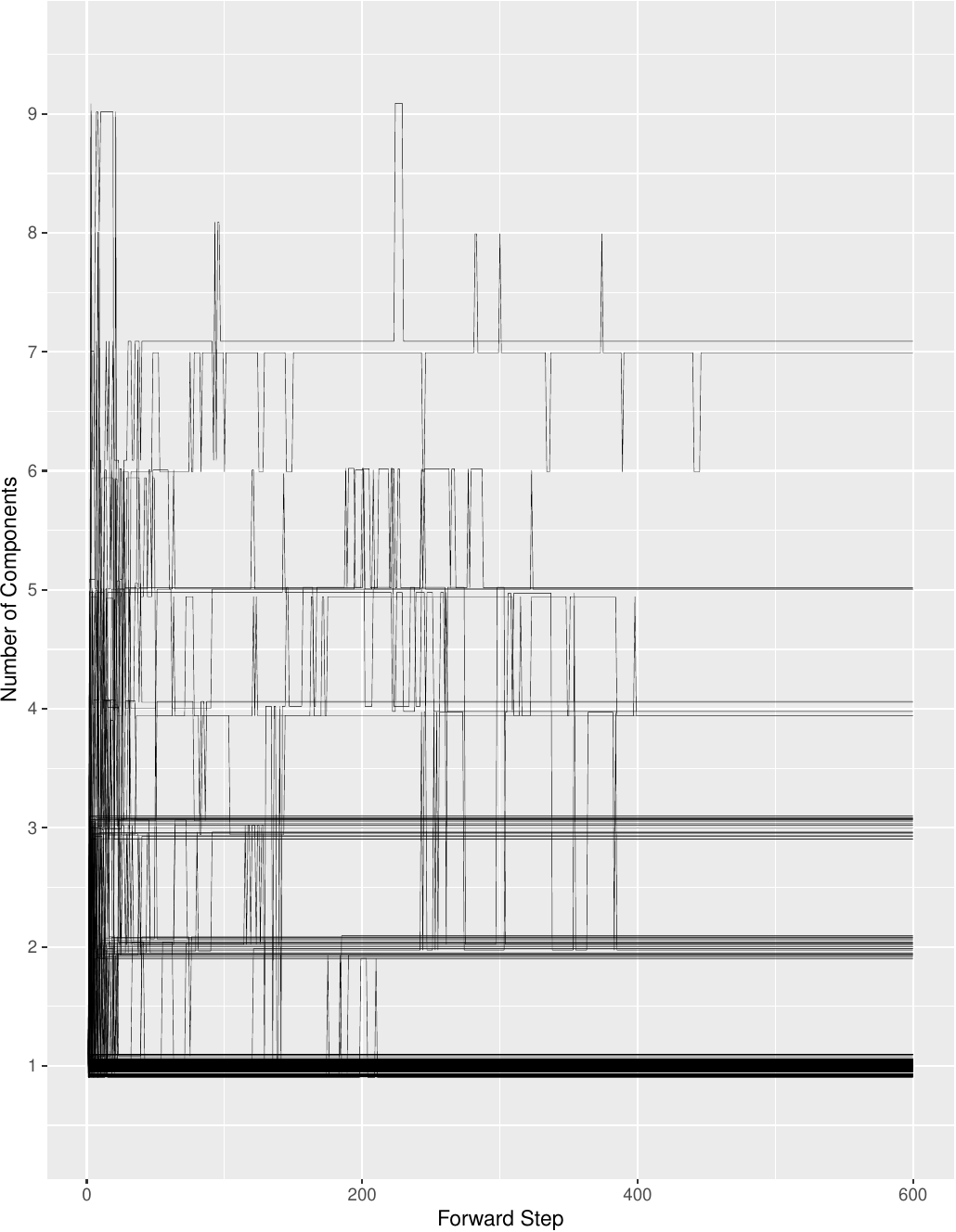}
\vspace{-0.5cm}
\caption{Example convergence diagram for density estimation in three-component GMM with $n=20$ initial data points, showing that model choice converges after $N=n+600$ imputed observations.}
\label{fig:jellyEM2_20}
\end{figure}

\begin{figure}
\centering
\includegraphics[width=\textwidth]{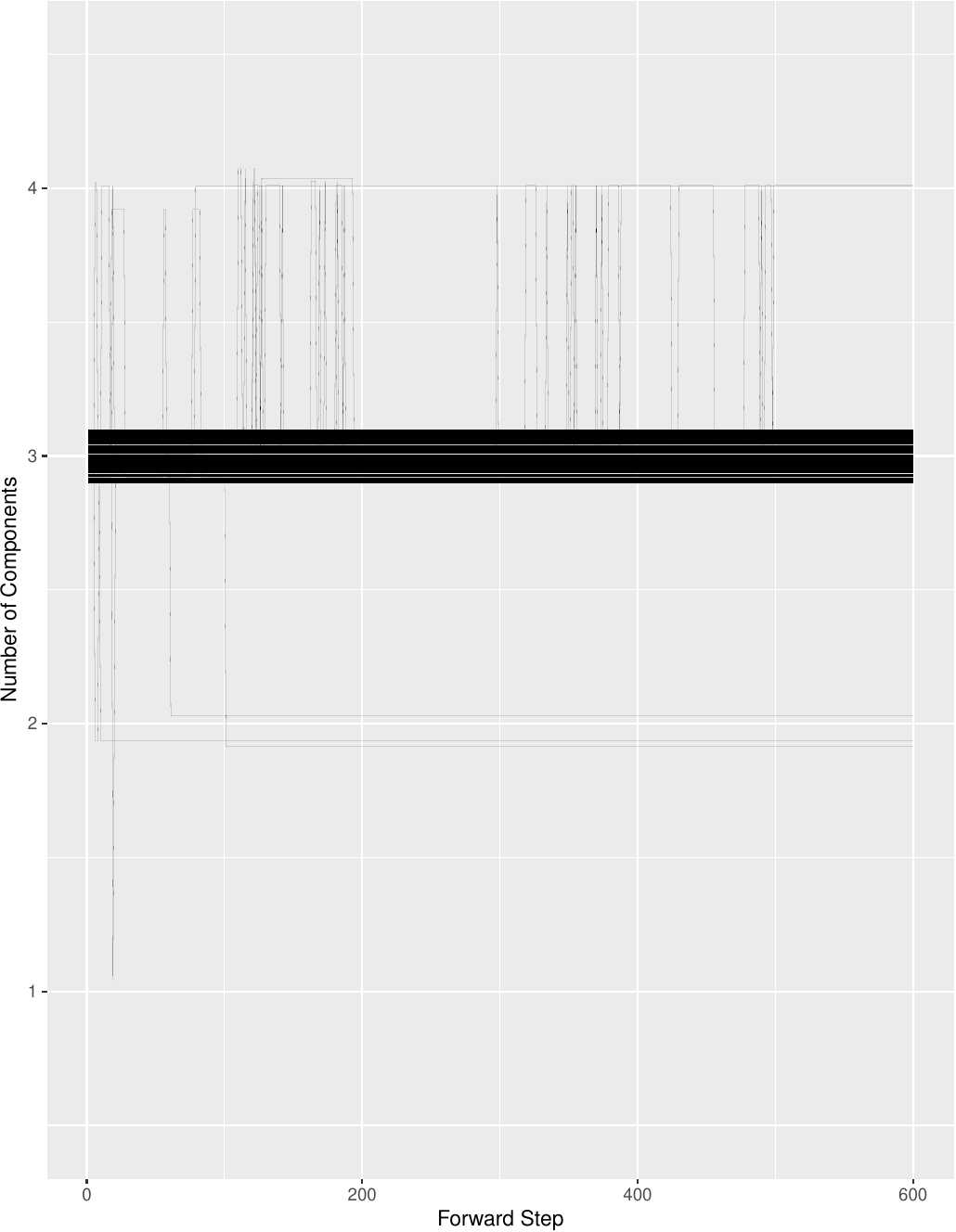}
\vspace{-0.5cm}
\caption{Example convergence diagram for density estimation in three-component GMM with $n=50$ initial data points, showing that model choice converges after $N=n+600$ imputed observations.}
\label{fig:jellyEM2_50}
\end{figure}

\begin{figure}
\centering
\includegraphics[width=\textwidth]{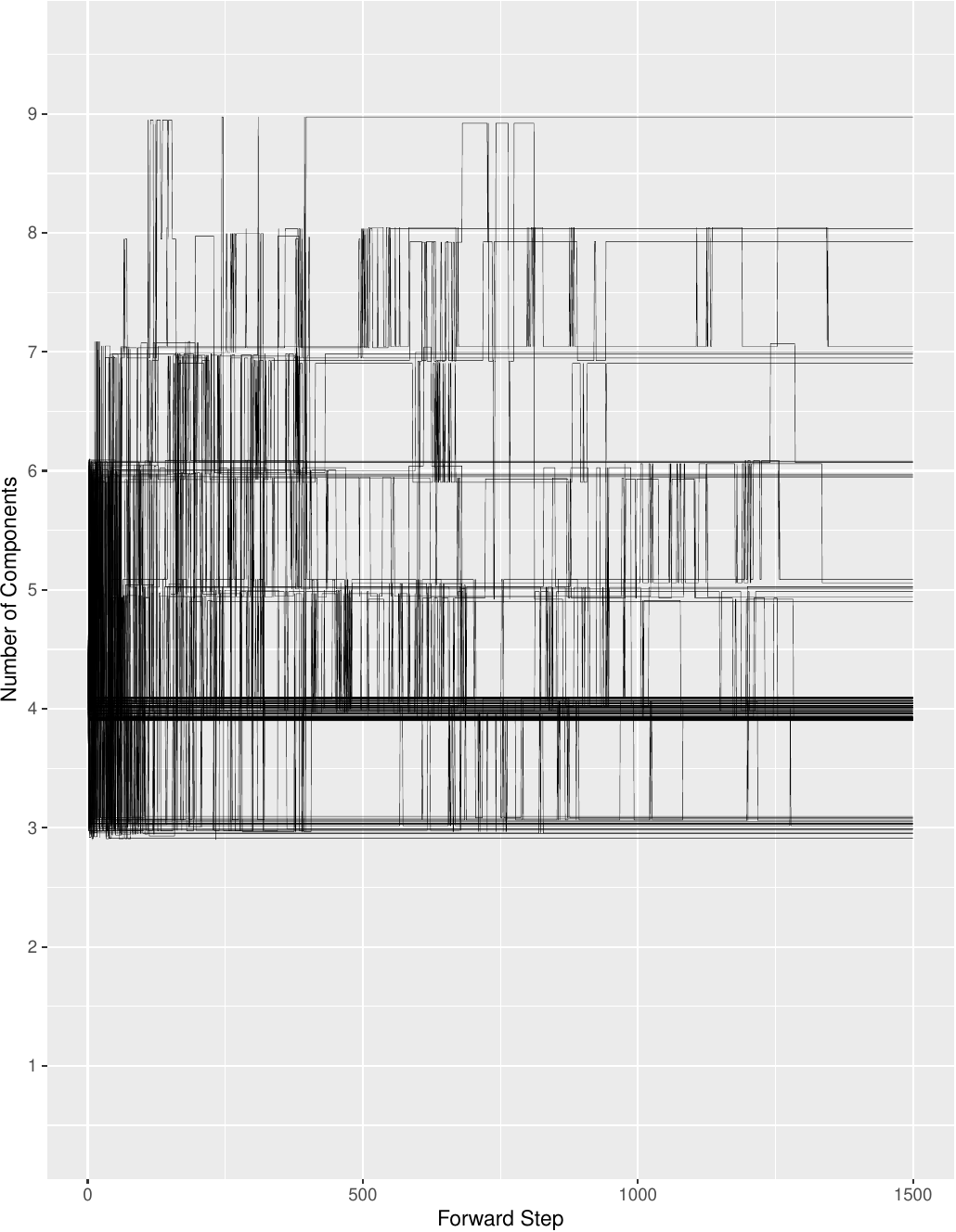}
\caption{Example convergence diagram for density estimation with galaxies dataset, showing that model choice converges after $N=n+1500$ imputed observations.}
\label{fig:jellyEM3}
\end{figure}

\end{document}